\pgfplotsset{compat=1.7}
\newcommand{\clientset}[0]{\mathbb{X}}
\newcommand{\clientsetsize}[0]{m}
\newcommand{\serverset}[0]{\mathbb{Y}}
\newcommand{\serversetsize}[0]{n}
\newcommand{\maxbinclient}[0]{\gamma}
\newcommand{\maxbinserver}[0]{\mu}
\newcommand{\hw}[0]{h}
\newcommand{\numbins}[0]{b}
\newcommand{\effectivebitlength}[0]{\bar{\lambda}}
\newcommand{\ex}[0]{$\times$}
\newcommand{\formatcomment}[1]{\scriptsize\textcolor{blue!25!black}{\texttt{#1}}}
\algrenewcommand{\algorithmiccomment}[1]{\hfill\parbox{5.2cm}{\formatcomment{//\,#1}}}
\algrenewcommand\algorithmicprocedure{\textbf{algorithm}}
\algnewcommand\algorithmicforeach{\textbf{for each}}
\newtheorem{theorem}{Theorem}
\definecolor{redbg}{RGB}{254,241,240}
\definecolor{redoutline}{RGB}{252,163,152}
\definecolor{redtext}{RGB}{207,24,34}
\definecolor{beigebg}{RGB}{245, 245, 220} %
\definecolor{pepsired}{RGB}{201,0,44}
\definecolor{pepsiblue}{RGB}{39,81,184}
\definecolor{pepsipurple}{RGB}{120,41,114}
\newcommand{\cpsi}[0]{{\color{pepsiblue}PE}{\color{pepsired}PSI}}
\newtheorem{corollary}{Corollary}[theorem]
\newcommand{\BMB}[0]{(1024*1024*8)}
\begin{document}

\date{}

\title{\Large \bf \cpsi{}: Practically Efficient Private Set Intersection in the Unbalanced Setting}




\author[1]{Rasoul Akhavan Mahdavi}
\author[1]{Nils Lukas}
\author[1]{Faezeh Ebrahimianghazani}
\author[1]{Thomas Humphries}
\author[2]{Bailey Kacsmar}
\author[1]{John Premkumar}
\author[1]{Xinda Li}
\author[3]{Simon Oya}
\author[1, 4]{Ehsan Amjadian}
\author[1]{Florian Kerschbaum}
\affil[1]{University of Waterloo}
\affil[2]{University of Alberta}
\affil[3]{University of British Columbia}
\affil[4]{Royal Bank of Canada}
\affil[ ]{\textit {\{rasoul.akhavan.mahdavi, nlukas, f5ebrahi, thomas.humphries, jpremkumar, xinda.li, ehsan.amjadian, florian.kerschbaum\}@uwaterloo.ca, kacsmar@ualberta.ca, simon.oya@ubc.ca} }

\maketitle

\begin{abstract}
    Two parties with private data sets can find shared elements using a Private Set Intersection (PSI) protocol without revealing any information beyond the intersection.
    Circuit PSI protocols privately compute an arbitrary function of the intersection - such as its cardinality, and are often employed in an \emph{unbalanced} setting where one party has more data than the other. 
    Existing protocols are either computationally inefficient or require extensive server-client communication on the order of the larger set. 
    We introduce Practically Efficient PSI or \cpsi{}, a non-interactive solution where only the client sends its encrypted data. 
    \cpsi{} can process an intersection of 1024 client items with a million server items in under a second, using less than 5 MB of communication. 
    Our work is over 4 orders of magnitude faster than an existing non-interactive circuit PSI protocol and requires only 10\% of the communication.
    It is also up to 20 times faster than the work of Ion et al., which computes a limited set of functions and has communication costs proportional to the larger set.
    Our work is the first to demonstrate that non-interactive circuit PSI can be practically applied in an unbalanced setting.
\end{abstract}


\section{Introduction}

\begin{table*}[t]
    \centering
    \caption{
        Comparing the properties of different PSI protocols.
        $m$ and $n$ are the client and server set sizes, respectively. We assume $m\ll n$ in the analysis of works which support the unbalanced setting.
        DH: Diffie-Hellman. 2PC: two-party computation. PDTE: private decision tree evaluation. *PSI using \cite{ion2020deploying, yingPSIStatsPrivateSet2022, maSecureComputationFriendlyPrivateSet2022a} is done in one round and an extra round is required to compute the sum. Moreover, it can only perform a limited set of functions in only one extra round.
        **PSI is realized by extending private set membership. 
    }
    \label{tab:compare-protocols}
    \begin{tabular}{c|c|c|c|c|c|c|c|c}
    \toprule
        Work & Tools & {\small Unbalanced} & Rounds & Comm. & Comp. & PSI & Labelled PSI & Circuit PSI \\
    \midrule
        \cite{pinkas2018scalable} & B-OPPRF + 2PC & \ex & $1 + \log_2 \lambda$ & $O(n)$ & $O(n\log n)$ & \checkmark & \checkmark & 2PC \\
        \cite{chandran2022circuit-psi} & RB-OPPRF + 2PC & \ex & $1 + \log_2 \lambda$ & $O(n)$ & $O(n)$ & \checkmark & \checkmark & 2PC \\
    \midrule
        \cite{ion2020deploying, yingPSIStatsPrivateSet2022} & DH & \ex & 1+1 $^{*}$ & $O(n)$ & $O(n)$ & \checkmark & \ex & \checkmark$^{*}$ \\
        \cite{maSecureComputationFriendlyPrivateSet2022a,ciampiCombiningPrivateSetIntersection2018} & OT + PDTE & \ex & 1+1 & $O(n)$ & $O(n)$ & \checkmark & \checkmark$^{**}$ & \checkmark$^{**}$ \\
        \cite{chen2017fast} & HE & \checkmark & 1 & $O(m\log n)$ & $O(n\log n)$ & \checkmark & \checkmark & \ex \\
        \cite{chen2018LabelledPSI, cong2021LabelledPSI} & OPRF + HE & \checkmark & 2 & $O(m)$ & $O(n\log n)$ & \checkmark & \checkmark & \ex \\
    \midrule
        DiPSI~\cite{kacsmar2020differentially} & HE & \checkmark & 1 & $O(m)$ & $O(n)$ & \checkmark & \checkmark & \checkmark \\
        \cpsi{} & HE & \checkmark & 1 & $O(m)$ & $O(n)$ & \checkmark & \checkmark & \checkmark \\
    \bottomrule
    \end{tabular}
\end{table*}  
Privacy-preserving data analytics operates on the principles that (i) data is accessible and can be analyzed while (ii) ensuring that the user's privacy remains uncompromised. 
For instance, consider the application of secure contact discovery used in private messengers such as Signal~\cite{pirpsi, kales2019mobile}. 
Users want to connect with their friends without sharing their entire contact list with the server, and the server does not want to disclose the list of all contacts to each user. 
The revelation of this information by the client (or server) would defeat the very essence of these services.

Private Set Intersection (PSI) protocols offer a solution to this problem.
PSI is designed to compute the intersection of two sets while ensuring that nothing beyond the intersection is revealed. 
A specific form of PSI is \textit{unbalanced} PSI, in which one party, typically the client, has a substantially smaller set than the second party, the server.
Moreover, the client is often a resource-constrained device like a mobile phone or an IoT device with limited network connectivity.
Consequently, it benefits the client by minimizing network communication, reducing the number of network round trips, and offloading computational tasks to the server.

In some cases, the goal of the client and server is to compute functions of the intersection, rather than to learn the intersection itself. For instance, in the case of COVID-19 exposure applications, a user simply wants to know if they share any randomly generated codes with those stored on the server, indicating potential exposure~\cite{reichert2021circuit,takeshita2021provably, duong2020catalic, Trieu2020EpioneLC}; who those matches are is irrelevant.
Similarly, when gauging the effectiveness of ad campaigns by cross-referencing online advertisements with offline credit card transactions~\cite{ion2020deploying}, the advertiser is interested in knowing the total sales corresponding to a specific set of credit cards used at certain vendors.
Only the total sales amount is of interest, not the individual transactions.
\textit{Circuit PSI} is the term used when a function of the intersection is calculated rather than the intersection itself.

In the circuit PSI setting, the work of Kacsmar et al.~\cite{kacsmar2020differentially} (referred to as DiPSI) is best suited for the unbalanced non-interactive setting. However, DiPSI still suffers from inefficiency issues. 
In contrast to DiPSI, the numerous fast state-of-the-art solutions for PSI~\cite{cong2021LabelledPSI,chandran2022circuit-psi,pinkas2018efficient} are limited in terms of being extendable to the circuit PSI setting.
Specifically, each state-of-the-art solution is limited in one of the following ways.
The first limitation is for solutions that are restricted as to the functions they can compute~\cite{ion2020deploying, pjac}.
They are unable to execute functions beyond a specific predefined set.
Despite the fact that protocols designed for specific functions allow targeted optimizations, our evaluation shows that we outperform the state-of-the-art for many specific functions, particularly for large server set sizes.
Furthermore, targeted optimizations become obsolete when the function needs to be changed.
Second, adapting some solutions to circuit PSI levies a computational and communication burden on the client to compute arbitrary functions, usually in the form of two-party computation~\cite{chandran2022circuit-psi, pinkas2019efficientcircuitbased}.
This communication burden is usually proportional to the larger set~\cite{ion2020deploying,chandran2022circuit-psi, pinkas2019efficientcircuitbased} or depends on the complexity of the task.
Such a burden is not acceptable when designing a non-interactive protocol.

\paragraph{Summary of Contributions.}
In this work, we propose \cpsi{}, an efficient, non-interactive circuit PSI protocol using homomorphic encryption.
Homomorphic Encryption (HE) is a form of encryption that permits computation on the data while in encrypted form.
In \cpsi{}, the client encrypts its elements using an HE scheme and sends them to the server, which compares the elements homomorphically.
The output of each comparison is binary and is used to compute functions.
Our approach removes the limitations that are present in existing work.
\cpsi{} does not restrict the function that can be evaluated, there is no communication or computation burden on the client to compute the function, and it only has communication complexity proportional to the smaller client set size.

Our approach also overcomes the impractical runtimes associated with the DiPSI protocol from the literature~\cite{kacsmar2020differentially}.
We address the slow comparisons by comparing items using the efficient constant-weight equality operator~\cite{mahdavi2022constant}, permutation-based hashing, and cuckoo hashing. Altogether this reduces the size of the elements that need to be compared such that \cpsi{} is over four orders of magnitude faster than DiPSI and requires 90\% less communication.
For example, \cpsi{} can compute the intersection of 1024 and one million 32-bit elements in under one second with less than 5~MB of communication.

\cpsi{} has a runtime and communication cost comparable to other PSI protocols based on HE~\cite{chen2018LabelledPSI,cong2021LabelledPSI}, which cannot extend to circuit PSI.
Moreover, the client is required to do much less computational work in \cpsi{}, making it compatible with use cases where the client does not have much computational power, i.e., the unbalanced setting.
We empirically compare \cpsi{} with state-of-the-art protocols in our experiments, showing that \cpsi{} is even competitive with PSI protocols which cannot extend to circuit PSI.
Our work is the first to show that non-interactive circuit PSI can be efficient up to the point it is practical for use in applications.

\section{Related Work}

The term \textit{private set intersection} (PSI) was coined by Freedman et al.~\cite{psi04}.
In a PSI protocol, two parties compute the intersection of their respective private sets such that no information is revealed except the intersection itself.
If the intersection is only revealed to one of the parties, this is known as one-sided PSI.
However, one-sided PSI can be extended to two-sided PSI by running the protocol a second time and switching the roles of the client and server.

\subsection{PSI with Leakage}
In some applications and real-world use cases, such as checking for compromised credentials, the overhead of PSI is high if the client element is compared to all elements.
For example, comparing the client credentials to billions of leaked credentials is expensive.
To alleviate the overhead, some works propose to only compare the client elements to a subset of the server elements~\cite{li2019protocols, thomasProtectingAccountsCredential2019}.
For example, elements are distributed into buckets by their prefix. To query for a specific element, the client only queries the bucket corresponding to that element instead of the entire set.
The server will know that the client's query falls into that specific bucket so such an approach partially leaks information about the client at the cost of better performance.
This leakage degrades the security guarantees but may be acceptable in some applications. 
Thomas et al.~\cite{thomasProtectingAccountsCredential2019} predict that the prefix of a user's credentials is common with about 350k-470k other users, assuming each client has one credential in the database.

Our work does not follow this approach and instead offers PSI without leakage.
While such an approach requires more computational effort, it offers better security.
We note that our protocol could also be combined with bucketization, offering a tunable trade-off between security and performance from the leakage of current protocols~\cite{li2019protocols, thomasProtectingAccountsCredential2019} to no leakage.

\subsection{Unbalanced PSI}
Unbalanced PSI is a special case of PSI where one party, which we call the client, has a much smaller set compared to the other party, i.e., the server~\cite{kales2019mobile, chen2017fast, chen2018LabelledPSI, cong2021LabelledPSI, aranha2022laconic, alamati2021laconic, li2019protocols, duong2020catalic}.
This is a common assumption in many applications of PSI.
For example, in compromised credentials checking (C3)~\cite{li2019protocols, kales2019mobile}, the client wishes to access a large database containing credentials that are leaked on the web.
While the set of client credentials is small, e.g., a few hundred credentials, the database of leaked credentials is in the billions and growing~\cite{li2019protocols, kales2019mobile}.
Moreover, the client is usually bandwidth-constrained and has limited computational power.
Hence, protocols with limited network round trips and little client-side computation are preferred.
A narrower variant of unbalanced PSI exists, dubbed laconic PSI, with the additional constraint that only the server learns the output of the protocol~\cite{aranha2022laconic, alamati2021laconic}.

While there are many efficient state-of-the-art PSI protocols~\cite{pinkas2015phasing, pinkas2018efficient, pinkas2018scalable, chandran2022circuit-psi, maSecureComputationFriendlyPrivateSet2022a}, the communication complexity of these solutions scales with the larger set, making them unsuitable for unbalanced PSI.
For example, Pinkas et al. and Chandran et al. propose PSI protocols that compute an oblivious PRF for each element of the server set~\cite{pinkas2018scalable, chandran2022circuit-psi}. 
Ma and Chow propose to construct a private decision tree from the server set, which is encrypted and sent to the client.
The client then obliviously traverses this decision tree using OT~\cite{maSecureComputationFriendlyPrivateSet2022a}.
Such solutions are useful in high-bandwidth, low-latency networks but fall short over high-latency networks.

Amongst solutions that have no leakage, the current most efficient non-interactive PSI protocols in the unbalanced case are based on homomorphic encryption~\cite{chen2017fast, chen2018LabelledPSI, cong2021LabelledPSI}.
Homomorphic encryption (HE) is a type of encryption that allows computations on data in its encrypted form.
Chen et al.~\cite{chen2017fast} were among the first to propose a PSI protocol specifically designed for the unbalanced setting using homomorphic encryption.
If $\serverset$ represents the server set, the key idea is to construct the polynomial $P(x)=r \prod_{y_i\in\serverset} (x-y_i)$, for some random value $r$. The client sends an encryption of its elements to the server, which then homomorphically evaluates $P(x)$ on the client input.
For a certain client element $x_0$, if $x_0\in\serverset$, then $P(x_0)=0$, otherwise $P(x_0)$ is a random number.
The idea of representing the set as a polynomial had been proposed before~\cite{10.1007/11535218_15} and Chen et al. included optimizations to reduce the multiplicative depth of the function that is homomorphically evaluated, thereby enhancing efficiency~\cite{chen2017fast}.
Chen et al.~\cite{chen2018LabelledPSI} and Cong et al.~\cite{cong2021LabelledPSI} built upon \cite{chen2017fast}, employing a combination of Oblivious Pseudorandom Functions (OPRF) and homomorphic encryption.
This improved performance extended the security to the malicious model and also allowed the protocol to extend to elements of arbitrary bitlength.

\subsection{Labelled PSI}
A specific variant of PSI is \textit{labelled PSI}, where the server has a private label associated with each element in its private set. 
The server stores pairs in the form of $(y_i, \ell_i)$ where $y_i$ is the identifier of the element and $\ell_i$ is the label associated with that element.
The two parties compute the intersection of their sets and output all the pairs $(y_i, \ell_i)$ where $y_i$ is in the intersection of the two sets.
Nothing beyond the elements in the intersection and their corresponding labels are revealed to either one of the parties.
When the client has only one element, this problem is equivalent to private information retrieval by keywords, first proposed by Chor et al.~\cite{chor1997private}.
In the context of private contact discovery, the client may wish to retrieve the public keys of users in their contact list~\cite{pirpsi, kales2019mobile}.

Many PSI protocols can be extended to labelled PSI with additional computation and communication costs.
For example, Chen et al.~\cite{chen2018LabelledPSI} and Cong et al.~\cite{cong2021LabelledPSI} extend their protocol to labelled PSI by evaluating another polynomial which evaluates to the $\ell_i$ when $x=y_i$.
For the protocols of Pinkas et al.~\cite{pinkas2019efficientcircuitbased} and Chandran et al.~\cite{chandran2022circuit-psi} the extension to labelled PSI can be done without changing the asymptotic complexity, but it does result in doubling the concrete communication and computation cost.

\subsection{Circuit PSI}

Another variant of PSI is \textit{circuit PSI}, where the objective is to compute a function of the intersection, rather than the intersection itself. One example is PSI-Cardinality, where the parties compute the size of the intersection~\cite{kacsmar2020differentially}. Another example is PSI-Sum~\cite{ion2020deploying}, where the sum of values associated with elements in the intersection is revealed.

One application for PSI-Sum, known as ad-conversion, is assessing the effectiveness of ad campaigns~\cite{ion2020deploying}. A company purchasing an ad through an ad service company, such as Google, would like to know the total purchases made by users who have seen their particular ad. If Google can show a client who paid to have an ad displayed through their system that the ad leads to sales, then it is easier to convince the client to purchase more ads in the future. 
Such ad-conversion computations can be done by linking those who have seen an ad with their credit card transactions.
However, credit card vendors are unwilling to disclose their clients' transactions. PSI-Sum is a solution in this scenario.

In the existing approaches~\cite{chen2017fast,chen2018LabelledPSI, cong2021LabelledPSI}, every comparison between a client and server element results in an arithmetic output. An arithmetic output means that if the elements are equal, the output is zero, and the output is a random non-zero number in $\ZZ_p$, for some $p$.
While using this approach results in fast PSI protocols, computing a function of the intersection is not feasible.
To compute a function, the result of each comparison must be converted to a binary output, i.e., one for a match and zero otherwise.

One approach to computing such functions is to use secure two-party computation (2PC) to convert the arithmetic output into a secret-shared binary output.
2PC can then be used to compute any arbitrary function over the binary output.
While there is flexibility regarding the functions that can be computed in this approach, the communication complexity of MPC protocols is high.
More specifically, the communication complexity depends on the size of both sets and the complexity of the desired function~\cite{chandran2022circuit-psi, pinkas2019efficientcircuitbased}.

Another approach is to compare elements from the client and server homomorphically via a homomorphic equality circuit. 
The output of the comparisons is binary in this case.
This is the only non-interactive approach in the literature and is more suitable for the unbalanced setting since it does not require any client interaction.
The work of Kacsmar et al.~\cite{kacsmar2020differentially}, which they call DiPSI, uses this approach.
DiPSI achieves asymptotically optimal communication, given that only the client dataset is communicated over the network.
The authors use this protocol to compute a differentially private PSI-Sum and other functions.
The main problem with DiPSI is the extremely high runtime.
PSI-Sum between 1024 client elements and 1 million server elements can take over 600 minutes.
This can mainly be attributed to the homomorphic comparison function that they employ.
We propose a solution that compares elements homomorphically, similar to DiPSI but is over four orders of magnitude faster.

Some works take a different approach to circuit PSI by designing a protocol that can compute a very specific function very efficiently~\cite{pjac, ion2020deploying}.
This is in contrast to the other approaches where an arbitrary function can be derived.
A specialized protocol outperforms generalized protocols in terms of performance but restricts the functions that can be derived and may leak additional information in the process~\cite{ion2020deploying}.

Ion et al.~\cite{ion2020deploying} propose a protocol for PSI-Sum-with-Cardinality.
This protocol outputs, to the server, the sum of values associated with elements in the intersection, whilst also revealing the intersection cardinality to the client.
Lepoint et al.~\cite{pjac} proposed Private Join and Compute (PJC) which provides PSI-sum, which is the sum of values associated with elements in the intersection, without leakage of the intersection cardinality.
However, their protocol cannot compute any function other than the sum and has an expensive offline phase.
PSI-Stats follows a similar approach and supports more functions including arithmetic and geometric mean, standard deviation, and approximate composition~\cite{yingPSIStatsPrivateSet2022}.
However, this protocol is still limited to the functionalities proposed by the authors.
A common feature of these works is they require three rounds of interaction and the final results are revealed to the server.

\section{Background}\label{sec:background}

\subsection{The FV Cryptosystem}
The FV cryptosystem~\cite{fan2012somewhat} permits computation over vectors of numbers using Single Instruction, Multiple Data (SIMD) operators. Plaintexts are vectors of length $N$ where each element is in $\ZZ_t$. $N$ and $t$ are called the \textit{polynomial modulus degree} and the \textit{plaintext modulus}, respectively. Ciphertexts are vectors of polynomials with coefficients in $\ZZ_q$, where $q$ is called the \textit{ciphertext modulus} and $\mathcal{C}$ is the ciphertext space.
The following operations are permitted in FV:
\begin{itemize}\itemsep0mm
    \item Addition: Given $c_X,c_Y\in \mathcal{C}$ which encrypts $X,Y\in\ZZ_t^N$, respectively, output $c_A$ which encrypts $X+Y$.
    \item Plain Multiplication: Given $X\in\ZZ_t^N$ and $c_Y\in \mathcal{C}$ which encrypts $Y\in\ZZ_t^N$, output $c_{PM}$ which encrypts $X\odot Y$.
    \item Multiplication: Given $c_X,c_Y\in \mathcal{C}$ which encrypt $X,Y\in\ZZ_t^N$, respectively, output $c_M$ which encrypts $X\odot Y$.
\end{itemize}

Note the addition ($+$) and multiplication ($\odot$) operations from FV over vectors in $\ZZ_t$ are defined element-wise. 

\subsection{Constant-weight Encoding}
A constant-weight code is a collection of binary codewords with the same Hamming weight, i.e., each codeword has a fixed number of bits set to one. Throughout this paper, we denote the common Hamming weight that the codewords share as $h$.
We denote the set of $\ell$-bit constant-weight codewords with Hamming weight $h$ by $CW(\ell,h)$.

We use the arithmetic constant-weight equality operator from Mahdavi and Kerschbaum~\cite{mahdavi2022constant} for comparing constant-weight codewords with Hamming weight $\hw$.
The choice of $\hw$ impacts the computation cost of the operator and the required communication.
This operator does not depend on the operands which it is comparing, and so can be computed using SIMD operations.
\Cref{alg:arith-cw-eq} describes this operator. Throughout this paper, we denote $[n] = \{0,1,...,n-1\}$ for $n\in\NN$. 

\begin{algorithm}
	 \caption{Arithmetic Constant-weight Equality Operator}
	 \label{alg:arith-cw-eq}   
	 \begin{algorithmic}[1]
        \Procedure{Arith-CW-Eq}{$x, y$}
	 	\State $h' = \sum_{i\in[\ell]}x_i \cdot y_i$
	 	\vspace{1mm}
	 	\State $e = (1/h!) \prod_{i\in[h]} (h' - i)$
	 	\vspace{1mm}
            \State \textbf{return} $e \in \{0,1\}$
        \EndProcedure
	 \end{algorithmic}
\end{algorithm}

\subsection{Hashing Optimizations}\label{sec:binning-strategy}
We use a collection of hashing techniques to reduce the total number of comparisons and the cost of each comparison.

\emph{Hashing-to-Bins.}
Hashing-to-bins reduces the number of comparisons in a PSI protocol~\cite{pinkas2015phasing, pinkas2018scalable, freedman2016efficient, kacsmar2020differentially}.
Elements are distributed into a predetermined number of bins and comparisons are only performed between client and server elements in the same bin; since the bin in which an element is placed is determined using hashes of that element.
When using this strategy, note that it is necessary to pad each bin with dummy elements up to a predetermined size to prevent the load of the bins from leaking information about the dataset.  

There is a possibility that the number of elements in a bin exceeds the predetermined maximum, which would constitute a failure in the protocol. 
Thus, it is necessary to employ a strategy that cleverly manages elements that fall into the same bin. 

Cuckoo hashing~\cite{TCS-070, cuckoohashing2004} is a strategy for hashing elements into bins such that elements can be ejected from their current bin and placed in another bin after their initial placement.
In Cuckoo hashing~\cite{cuckoohashing2004}, there are $m$ elements which are placed in $b=\epsilon m$ bins, for some constant $\epsilon$.
Each bin holds at most one element and the placement of each element is determined using $k$ hash functions, denoted as $H_i: \mathcal{X}\mapsto [b]$ for $i=1,2,\cdots,k$.

For each element $x$, we place $x$ in bin $H_1(x)$. If bin $H_1(x)$ was already occupied by $x'$, such that $H_1(x)=H_{i}(x')$, we evict $x'$ from the bin. The new placement for $x'$ is bin $H_{i+1 \mod k}(x')$.
If that bin is empty, we are done; otherwise, we repeat this process, evicting the current resident of the bin.
To ensure the protocol terminates, it is necessary to define an upper bound on the number of evictions permitted.
If an empty bin is not found with the permitted number of evictions, the element is placed in a stash with a fixed size.
Thus, to locate an element $x$, it is sufficient to look in bin $H_1(x)$, $H_2(x)$, $\cdots$, $H_k(x)$ or in the stash.
When selecting parameters $\epsilon$, $k$, and the stash size, please refer to past work analyzing the failure rates of Cuckoo hashing~\cite{kirsch2010more, cuckoohashing2004, pinkas2018efficient, Arbitman2010BackyardCH}.

\emph{Permutation-based Hashing}
Permutation-based hashing (PBH) is a technique to reduce the length of elements~\cite{Arbitman2010BackyardCH, pinkas2015phasing,lam2016breaking,pinkas2018scalable}.
We describe PBH in the following paragraph using the assumption that the number of bins is $b=2^c$, for some $c\in \NN$. 

For a $\lambda$-bit element $x$, assume $x=x_H || x_L$ where $||$ is the concatenation operation and $|x_H|=c$ and $|x_L|=\lambda-c$. In the binning procedure, instead of placing $x$ in bin $H(x)$, we place $x_L$ in bin $b_x=x_H \xor H(x_L)$. Now, if two elements $x$ and $y$ are placed in the same bin and match, this means that $x_L = y_L$ and $b_x = b_y$. From that, we have
\begin{align*}
    b_x &= b_y \Rightarrow x_H \xor H(x_L) = y_H \xor H(y_L)  \Rightarrow x_H = y_H
\end{align*}

Combining $x_L = y_L$ and $x_H = y_H$, we can deduce that $x=y$.
Using this technique, we compare $\bar{\lambda}$-bit elements where $\bar{\lambda}=\lambda-c$.
Since we can identify matches using only these shorter elements, we will only store elements of length $\bar{\lambda}$ in our bins. Thus, we term $\bar{\lambda}$ the \textit{effective bitlength} and use it as the bitlength parameter throughout our descriptions and analysis of our protocol.

\begin{figure}[H]
    \centering
    \includegraphics[width=\columnwidth]{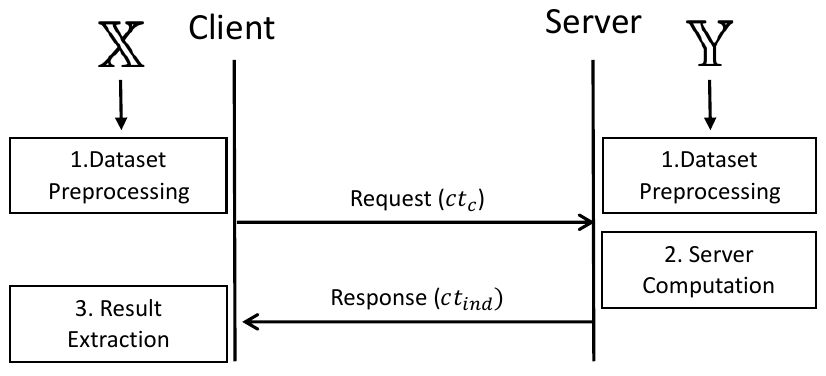}
    \caption{Stages of \cpsi{}.}
    \label{fig:overview}
\end{figure}

\section{\cpsi{}: Practically Efficient PSI} \label{sec:alg-desc}

In this section, we describe our protocol \cpsi{} in detail; specifically, we describe the necessary steps to compute the intersection between a client set of size $n$ and a server set of size $m$, where each set contains $\lambda$-bit elements.
\cpsi{} is a PSI protocol that operates in one round and consists of three stages: Dataset Preprocessing, Server Computation, and Result Extraction. The client sends a \textit{request message} to the server and the server responds with a \textit{response message}.
The request and response message constitute the total communication complexity of the protocol.
At the end of the protocol, the client learns the intersection and the server learns nothing about the client set.

\Cref{fig:overview} visualizes the steps of the protocol and the interaction between the client and server.
In the base case (no variants), the Server Computation stage simply computes the intersection.
This can be extended to other functionalities such as labelled PSI, computing a function on top of the intersection, and laconic PSI.
See \Cref{sec:cpsi-variants} for such variants. \Cref{tab:notation} summarizes all the notations used in the protocol description.

\begin{table}[t]
\begin{center}
\caption{Summary of notation for algorithm description. }
\label{tab:notation}
\begin{tabular}{p{0.7 cm}l}
  \toprule
  \multicolumn{2}{c}{\textbf{Input Parameters}}\\
  \midrule
  $\mathbb{X}$ & Client set\\
  $\mathbb{Y}$ & Server set\\
  $m$ & Client set size\\
  $n$ & Server set size\\
  $\lambda$ & Bitlength of elements \\
  $\alpha$ & Maximum Error rate \\
  \midrule
  \multicolumn{2}{c}{\textbf{Cryptographic Parameters}}\\
  \midrule
  $N$ & Polynomial modulus degree\\
  $q$ & Coefficient modulus \\
  $t$ & Plaintext modulus \\
  \midrule
  \multicolumn{2}{c}{\textbf{Binning Parameters}}\\
  \midrule
  $\numbins$ & Number of bins in $T_c$ and $T_s$ \\ 
  $\maxbinclient$ & Client maximum bin load \\ 
  $\maxbinserver$ & Server maximum bin load \\ 
  $\effectivebitlength$ & Effective bitlength (used in PBH)\\
  \midrule
  \multicolumn{2}{c}{\textbf{Constant-weight Code Parameters}}\\
  \midrule
  $h$ & Hamming weight of constant-weight code \\
  $\ell$ & Constant-weight code length \\ 
  \midrule
  \multicolumn{2}{c}{\textbf{Auxiliary Notation}}\\
  \midrule
  $T_c$ & Client table with $\numbins$ bins with $\maxbinclient$ elements\\
  $T_s$ & Server table with $\numbins$ bins with $\maxbinserver$ elements\\
  $T_c'$ & $T_c$ with \textbf{encoded} elements\\
  $T_s'$ & $T_s$ with \textbf{encoded} elements\\
  $pt_c$ & Plaintexts of clients elements\\
  $ct_c$ & Ciphertexts of clients elements\\
  $val_c$ & Ciphertexts of clients values\\
  $pt_s$ & Plaintexts of servers elements\\
  $val_s$ & Plaintexts of server values \\
  \bottomrule
 \end{tabular}
\end{center}
\end{table}

\subsection{Dataset Preprocessing} \label{sec:data-prep}
In the first stage, the client and server must asynchronously preprocess their datasets to prepare them for the next stage.
Preprocessing the dataset reduces the protocol latency when a client queries a server.
This preprocessing is also helpful for the server, which has a large dataset and can reuse the preprocessed dataset for many queries by various clients.
Both parties, client and server, preprocess their data in three steps: hashing optimization, encoding elements, and finally, batching (and encryption).
\Cref{fig:preprocessing} visualizes the steps of the preprocessing.
Note the actions in each step by a party are dependent on whether they are in the role of client or server.
We provide a high-level overview here with relevant pseudocode in \Cref{sec:data-prep-code}.

\begin{figure*}[t]
    \centering
    \includegraphics[width=\textwidth]{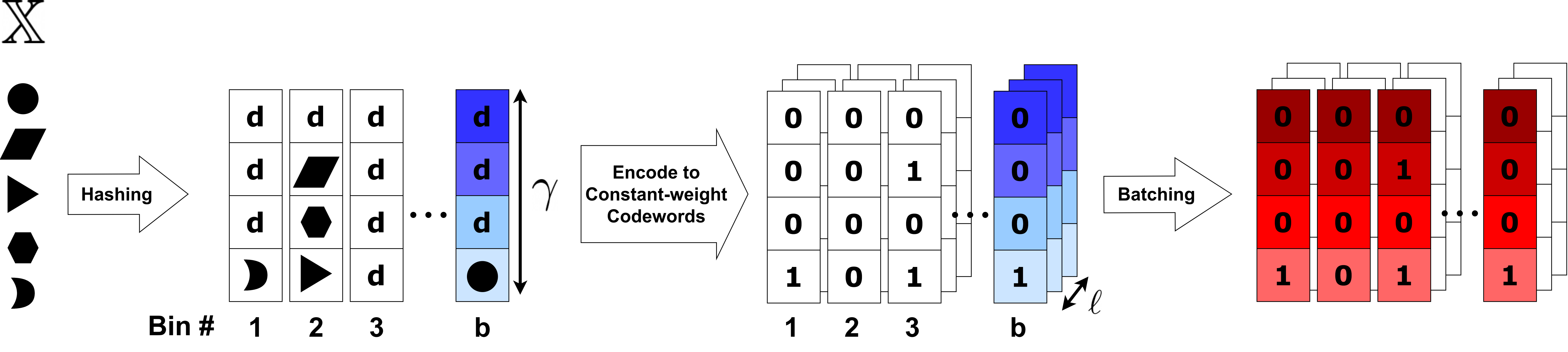}
    \caption{Client Dataset Preprocessing in \cpsi{}. $d$ indicates dummy elements, $b$ is the number of bins, and $\gamma$ is the clients maximum bin load.
    From left to right: symbols represent the real-valued payload that is hashed into bins with a maximum bin load $\gamma$. The payload is encoded into bits using our constant-weight codewords, whereby the same color indicates the same payload, and finally encrypted into a ciphertext by batching across bins. The preprocessing outputs $\ell \cdot \gamma$ ciphertexts.}
    \label{fig:preprocessing}
\end{figure*}

\emph{Hashing Optimization.}
Our hashing optimization uses a hashing-to-bins strategy in conjunction with permutation-based hashing. 
Hashing-to-bins reduces the number of elements we need to compare, and permutation-based hashing reduces the size of elements that are inserted in the table (recall $\effectivebitlength=\lambda-c$ as the effective bitlength from Section~\ref{sec:background}). 
The selected hashing-to-bins strategy in \cpsi{} must insert elements into $\numbins$ bins and does not require a stash.
Elements that fall into the stash must be compared with all other elements, which severely diminishes the performance of our protocol.
Moreover, permutation-based hashing cannot be used in conjunction with a stash.
Hence, for the binning portion of the hashing optimization, we can use known hashing strategies that do not have a stash, such as Cuckoo hashing without a stash~\cite{pinkas2019efficientcircuitbased, pinkas2018scalable} or Dual hashing~\cite{kacsmar2020differentially}.

The client and server have predetermined maximum bin loads, $\maxbinclient$ and $\maxbinserver$, respectively, which are a function of their set sizes and the number of bins.
After inserting elements into the bins, each party pad bins with dummy elements up to the maximum load of the bin.
If more elements are inserted into a bin than the maximum load, the protocol fails.
We denote the client's binning table with $\numbins$ bins and a maximum load of $\maxbinclient$ as $T_c$.
Similarly, we denote the server's binning table with $\numbins$ bins and a maximum load of $\maxbinserver$ as $T_s$.

\emph{Encode to Constant-weight Codewords.}
The client encodes each $\effectivebitlength$-bit element in $T_c$ as a constant-weight codeword of length $\ell$ and Hamming weight $\hw$.
The Hamming weight can be chosen freely but affects the communication-computation tradeoff (see \Cref{sec:analysis}).
We set $\ell$ as a function of $\effectivebitlength$ and $\hw$ such that there is a unique representation for every $\effectivebitlength$-bit element in $T_c$.
So, 
\begin{align}\label{eq:code-length}
    \ell = \ell(\effectivebitlength, \hw) = \min \left\{ \ell\in\NN\ |\ \binom{\ell}{\hw} \geq 2^{\effectivebitlength} \right\}.
\end{align} 

We denote $T_c'$ as the new table which holds the encoded elements. 
Dummy elements are encoded to the all-zero string of length $\ell$.
The server constructs $T_s'$ from $T_s$ using the same procedure as the client uses to construct $T_c'$ from $T_c$.

\emph{Batching (and Encryption).}
In the last step, the client restructures $T_c'$ to obtain a table of plaintexts.
Recall that plaintexts are vectors of length $N$.
Each position in $pt_c[i][j]$ corresponds to elements from one of the $\numbins$ bins.
More concretely, $pt_c[i][*]$ contains the bits of the $i$-th encoded element in each of the $\numbins$ bins of $T_c$. We call this the client's $i$-th batch.
Moreover, this process, which we call \textit{batching}, allows the protocol to perform comparisons in different bins simultaneously.
Each shade of red in \Cref{fig:preprocessing} shows the contents of one plaintext.
The client additionally encrypts each plaintext with its secret key.
The server constructs plaintexts $pt_s$ in a similar fashion from the contents of $T_s'$ and the server's $i'$-th batch is defined similarly. The server does not need to encrypt its plaintexts.

We assume $\numbins=N$ in this example to simplify the explanation, but we show how to relax this assumption in relevant steps of the protocol.

\subsection{Server Computation}
The client sends the ciphertexts it has produced to the server for the next stage.
In this stage, the server receives the ciphertexts of the client's elements and performs the set intersection (see \Cref{alg:server-comp}).
At a high level, for each $i\in [\maxbinclient]$ and $i' \in [\maxbinserver]$, the server compares batch $i$ of the client's dataset with batch $i'$ of the server's dataset. The output of this step is a table of ciphertexts of size $\maxbinclient\times\maxbinserver$, which we denote by $ct_{eq}$. For $i\in [\maxbinclient]$ and $i' \in [\maxbinserver]$,  $ct_{eq}[i][i']$ denotes the result of comparing elements from batch $i$ from the client and batch $i'$ from the server.

Recall that the arithmetic constant-weight equality operator does not depend on the operands of the comparison. Hence, we can use it as a SIMD operator in line 4 of \Cref{alg:server-comp}.

\begin{algorithm}
    \caption{Compute intersection of client and server set}
    \label{alg:server-comp}
    \begin{algorithmic}[1]
        \Procedure{Intersect}{$ct_c, pt_s$}
            \ForEach {client batch $i\in[\maxbinclient]$}
                \ForEach{server batch $i'\in[\maxbinserver]$} \label{alg:beginloop}
                    \State $ct_{eq}[i][i'] \leftarrow \textsc{Arith-CW-Eq}(ct_c[i], pt_s[i'])$
                \EndFor \label{alg:eqloop}
                \State $ct_{ind}[i] \leftarrow \sum_{i'\in[\maxbinserver]} ct_{eq}[i][i']$
            \EndFor
            \State \textbf{return} $ct_{ind}$
        \EndProcedure
    \end{algorithmic}
\end{algorithm}

If $\numbins\neq N$, line 4 of \Cref{alg:server-comp} is repeated $\ceil{\numbins/N}$ times.

\subsection{Result Extraction}
The server sends the ciphertexts from the previous stage to the client for the final stage.
In this final stage, the client decrypts the ciphertexts received from the server to obtain the indicator vector. 
The indicator vector specifies whether there was a match in the server set, for each element in the client set.
From that, the client extracts the intersection. \Cref{alg:extract-result} shows the procedure for this stage.

\begin{algorithm}
    \caption{Extract intersection from server output}\label{alg:extract-result}
    \begin{algorithmic}[1]
        \Procedure{ExtractIntersection}{}
            \ForEach {client batch $i\in[\maxbinclient]$}
                \State $ind[i] \leftarrow \dec(ct_{ind}[i],\texttt{sk}_c)$
                \ForEach{bin $k \in [b]$}
                    \If{$ind[i][k] = 1$}
                        \State Add $T_c[k][i]$ to intersection
                    \EndIf
                \EndFor
            \EndFor
        \State \textbf{return} intersection
        \EndProcedure
    \end{algorithmic}
\end{algorithm}


\section{\cpsi{} Variants}
\label{sec:cpsi-variants}


\Cref{sec:alg-desc} describes \cpsi{} for computing the intersection of two sets of $\lambda$-bit elements.
However, the algorithm can be adapted for several other scenarios as well.
Each adaptation adds extra steps to the protocol to achieve extra functionality or enhance performance.
We describe these adaptations in detail in this section.

\subsection{Optimization for Large Elements}

In some applications, the client and server elements are very large, e.g., file names or 256-bit strings.
In these cases, it is inefficient to compare large elements homomorphically.
Instead, we map each element to a smaller $\lambda$-bit element using a hash function.
However, if $\lambda$ is too small, there may be colliding elements, which results in an incorrect result.
Hence, we derive the failure rate of \cpsi{} if we map $m$ client elements and $n$ server elements to $\lambda$-bit elements.
Conversely, we choose $\lambda$ such that the failure rate of \cpsi{} is less than a given parameter $\alpha$.

Not all collisions result in failure of the protocol, i.e., an incorrect result.
Failure occurs when two unequal elements, one from the server and one from the client, have an identical mapping and are compared to each other by being placed in the same bin.
Such an event produces an incorrect result since two unequal elements match.
The upper bound on the failure rate is given in \Cref{lemma:fail-collide}.
Note that collision between two client elements (similarly, server elements) does not result in an incorrect result since client elements are not directly compared with each other.

\begin{restatable}{lemma}{failcollide}
\label{lemma:fail-collide}
When hashing $\clientsetsize$ client elements and $\serversetsize$ server elements to $\lambda$-bit strings, the probability of failure in the protocol due to collisions is upper bounded by
\begin{align}
    \frac{\numbins \maxbinclient \maxbinserver}{2^\lambda},
\end{align}

where $\numbins$, $\maxbinclient$, and $\maxbinserver$ are the number of bins, maximum client bin size, and maximum server bin size, respectively.
\end{restatable}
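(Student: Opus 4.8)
The plan is to reduce the claim to a union bound over every homomorphic equality check the protocol performs, modelling the map that shortens elements to $\lambda$-bit strings as an ideal hash $h$. First I would pin down exactly when a collision is harmful. A failure means the client reconstructs an incorrect intersection; since the protocol never produces false negatives (equal originals always hash to the same $\lambda$-bit string, hence match) and since two client elements — or two server elements — are never directly compared, a failure can only be caused by a client element $X$ and a server element $Y$ with $X \neq Y$ that are compared and whose equality check reports ``equal''. Here I would invoke the permutation-based hashing argument from \Cref{sec:background}: two elements that share a bin and whose stored (effective-bitlength) codewords agree must also agree on the bits absorbed into the bin index, so they agree on the entire $\lambda$-bit string; conversely, if $h(X) \neq h(Y)$ the pair either lands in different bins or is compared and correctly reported unequal. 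Thus a spurious match between such a pair requires $h(X) = h(Y)$, and for any fixed distinct pair this has probability exactly $2^{-\lambda}$.

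Next I would count the equality checks. Because the hashing-to-bins strategy is stash-free, the protocol only ever compares a client element against a server element lying in the same bin; there are $\numbins$ bins, and each holds at most $\maxbinclient$ client entries and $\maxbinserver$ server entries, so the number of checks — indexed by a (bin, client slot, server slot) triple — is at most $\numbins\maxbinclient\maxbinserver$. Since, by the previous paragraph, the failure event is contained in the union, over these triples, of the event that the two slot occupants are distinct original elements with colliding $\lambda$-bit images, a union bound gives
\begin{align*}
    \Pr[\text{failure}] \;\le\; \numbins\maxbinclient\maxbinserver \cdot \max_{(k,i,i')} \Pr[\text{the check at } (k,i,i') \text{ is a spurious match}] \;\le\; \frac{\numbins\maxbinclient\maxbinserver}{2^{\lambda}}.
\end{align*}

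The step I expect to be the main obstacle is the last inequality, i.e., bounding the per-check probability by $2^{-\lambda}$: the element occupying a given slot is itself a random variable, and it is correlated with the hash outputs because an element's bin is determined by its hash, so the two elements being compared in a triple are not literally fixed. I would handle this by conditioning on the complete assignment of elements to bins and slots. Once that is fixed, the two elements compared in a given triple are two concrete distinct originals $x$ and $y$, and it remains only to verify that this conditioning does not raise the probability that $h(x) = h(y)$ above $2^{-\lambda}$; this follows because the assignment depends on the hashes only through the bin indices (together with the permutation-based-hashing relation), whereas a spurious match additionally forces the remaining output bits to agree, so across the conditioning the two distinct elements must still collide on all $\lambda$ output bits. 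Integrating over the assignment then yields the stated bound. (If one prefers to sidestep the permutation-based-hashing correlation entirely, the same conditioning argument goes through when the binning uses hashes of the originals that are independent of the shortening hash.)
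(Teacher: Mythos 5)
Your proposal is correct and follows essentially the same route as the paper: a union bound over the at most $\numbins\maxbinclient\maxbinserver$ in-bin client--server comparisons, each of which is a spurious match only if two distinct originals collide on their $\lambda$-bit images, an event of probability $2^{-\lambda}$. The only difference is that you explicitly raise and handle the conditioning subtlety (that bin placement is correlated with the hash outputs), which the paper's proof passes over by simply asserting the per-pair collision probability in expectation over perfect hash functions.
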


We prove \Cref{lemma:fail-collide} in \Cref{sec:proof-lemma}. Using \Cref{lemma:fail-collide}, we can see that, to bound the failure rate by $2^{-\alpha}$, we must choose $\lambda$, such that $\numbins \maxbinclient \maxbinserver 2^{-\lambda} \leq 2^{-\alpha}$.
Hence, we set $\lambda=\alpha+\ceil{\log_2(b \maxbinclient \maxbinserver)}$.

\subsection{Labelled PSI}
\label{sec:labelled-psi}

\cpsi{} can be adapted to labelled PSI by altering the server computation and the result extraction.
\Cref{alg:server-comp-labelled} shows the adapted algorithm.
In this algorithm, we denote the server labels as $val_{s}$, which is a vector of $\maxbinserver$ plaintexts.
Moreover, the algorithm assumes that the label fits within one plaintext slot, but this can be extended to labels of arbitrary size.
We show this extension in \Cref{sec:labelled-large} and the corresponding result extraction procedure in \Cref{alg:server-comp-labelled}.

\begin{algorithm}
    \caption{Server Computation and Result Extraction for Labelled PSI}
    \label{alg:server-comp-labelled}
    \begin{algorithmic}[1]
        \Procedure{LabelledPSI}{$ct_c[i]$,$pt_s[i']$}
            \ForEach {client batch $i\in[\maxbinclient]$}
                \ForEach{server batch $i'\in[\maxbinserver]$}
                    \State $ct_{eq}[i][i'] \leftarrow \textsc{Arith-CW-Eq}(ct_c[i], pt_s[i'])$
                \EndFor 
                \State $ct_{res}[i] \leftarrow \sum_{i'} val_s[i'] \cdot ct_{eq}[i][i']$
            \EndFor
            \State Send $ct_{res}$ to the client
        \EndProcedure
        \vspace{2mm}
        \Procedure{ExtractLabels}{$ct_{res}$}
            \ForEach {client batch $i\in[\maxbinclient]$}
                \State $val[i] \leftarrow \dec(ct_{res}[i],\texttt{sk}_c)$
                \ForEach{bin $k \in [b]$}
                    \If{$val[i][k] \neq 0$}
                        \State Add $val[i][k]$ to intersection labels
                    \EndIf
                \EndFor
            \EndFor
            \State \textbf{return} intersection labels
        \EndProcedure
    \end{algorithmic}
\end{algorithm}

\subsection{Circuit PSI}
\label{sec:circuit-psi}

\cpsi{} can additionally be extended to compute functions over the intersection.
Such functions include PSI-Cardinality, PSI-Sum~\cite{kacsmar2020differentially}, PSI-Inner-Product~\cite{ion2020deploying,pjac}, and beyond. We introduce additional functionalities in the appendix.

\emph{PSI-Sum \& PSI-Cardinality.}
In PSI-sum, the sum of server values associated with the elements in the intersection are output to the client. \Cref{alg:server-comp-sum} show the modified server computation and result extraction for PSI-sum, respectively.
Kacsmar et al. use the same algorithm for securely computing PSI-sum and give proof of correctness and security~\cite{kacsmar2020differentially}.
PSI-Cardinality, which computes the size of the intersection, is a special case where the server values are equal to one.

\begin{algorithm}
    \caption{Server computation and result extraction for PSI-Sum}\label{alg:server-comp-sum}
    \begin{algorithmic}[1]
        \Procedure{ComputePSISum}{$ct_c, pt_s$}
            \ForEach {client batch $i\in[\maxbinclient]$}
                \ForEach{server batch $i'\in[\maxbinserver]$} \label{alg:sumbeginloop}
                    \State $ct_{eq}[i][i'] \leftarrow \textsc{Arith-CW-Eq}(ct_c[i], pt_s[i'])$
                \EndFor \label{alg:sumeqloop}
            \EndFor
            \State $ct_{sum} \leftarrow \sum_{i'\in[\maxbinserver]} val_s[i'] \cdot \sum_{i} ct_{eq}[i][i']$
            \State Sample vector $r \leftarrow [r_0, r_1, \cdots, r_{N-1}]$ s.t. $\sum r_i = 0$
            \State $ct_{res} \leftarrow ct_{sum} + r$
            \State \textbf{return} $ct_{res}$
        \EndProcedure
    \vspace{3mm}
        \Procedure{ExtractPSISum}{$ct_{res}$}
            \State $res \leftarrow \dec(ct_{res},\texttt{sk}_c)$
            \State $ct_{sum} \leftarrow \sum_{k\in[b]} res[i]$
            \State \textbf{return} $ct_{sum}$
        \EndProcedure
    \end{algorithmic}
\end{algorithm}

\emph{PSI-Inner-Product.}
PSI-sum can be generalized to PSI-Inner-Product~\cite{pjac} as well. 
Assuming that the encrypted client values are denoted as $val_c$, we simply replace line 4 of \Cref{alg:server-comp-sum} with 
$$
    sum \leftarrow \sum_{i'\in[\maxbinserver]} val_s[i'] \cdot \sum_{i} val_{c}[i] \cdot ct_{eq}[i][i'].
$$

\section{Analysis of \cpsi{}}
\label{sec:analysis}

In this section, we provide an overview of the security analysis and details on communication and computation complexities.
The communication and computation analysis is necessary to choose optimal parameters and enables \cpsi{} to achieve competitive performance compared to related work.

\paragraph{Security Analysis.}
The \cpsi{} protocol operates in the semi-honest model, i.e., the client and server follow the protocol but may try to infer extra information.
The client input privacy is guaranteed due to the use of homomorphic encryption.
The noise level in the homomorphic ciphertexts can reveal extra information about the server's dataset.
Hence, we need a technique to make the noise level of the output ciphertexts indistinguishable from fresh ciphertexts.
This is referred to as \textit{circuit privacy} in the literature~\cite{bourse2016fhe}.
Techniques such as noise flooding~\cite{chen2019multi, asharov2012multiparty} achieve this property and can also be used in this work.
Noise flooding is not implemented in the available homomorphic encryption libraries, so we do not perform that step in our implementation.
However, it has a negligible effect on runtime, i.e., at the cost of one extra homomorphic addition, so our experimental results are unaffected. 
We choose all parameters such that we have 128-bit security.

\emph{Extending to the Malicious Model.}
The malicious security model removes the assumption that adversaries follow the protocol and considers active adversaries behaving arbitrarily.
The malicious model for PSI has two limitations:
1) A malicious party may simply substitute its input and learn (parts of) the other party's set once the intersection is revealed, which cannot be prevented in the malicious model.
2) In a one-round protocol, arbitrary behaviour is only possible in the input-carrying first message.
Hence, additional information can only be computed from the message (output) received in response to this first message.
Using techniques such as input verification~\cite{certifiedsets,authpsi} and verifiable homomorphic encryption~\cite{viand2023verifiable} we can augment our protocol to account for malicious clients and servers. However, we leave a details description of such a protocol for future work.

\paragraph{Notes on Performance Analysis.}
The communication and computation costs of \cpsi{} depend on many parameters, the most apparent of which are the set sizes and the bitlength of elements.
Two parameters that can be tuned to optimize performance are the binning strategy and the Hamming weight.

The choice of binning strategy and associated parameters has been extensively discussed in previous works~\cite{cuckoohashing2004, pinkas2018scalable}, and using existing analysis, we can derive the relevant binning parameters $\numbins$, $\maxbinclient$, and $\maxbinserver$, given the set sizes.
Hence, in this work, we compute the communication and computation cost as a function of $\maxbinclient$, $\maxbinserver$, and $\numbins$ instead of the set sizes.
As a result, our analysis is agnostic to the binning strategy.

The other parameter we can optimize over is the Hamming weight.
The Hamming weight $\hw$ is a parameter of the constant-weight equality operator.
Cryptographic parameters depend solely on $\hw$ as it determines the multiplicative depth.
The code length is a function of the bitlength (and effective bitlength) and the Hamming weight, so it does not need to be optimized.
In this section, we discuss how the communication and computation of \cpsi{} are affected by the Hamming weight and the tradeoff between these two metrics.

\paragraph{Analysis for Parameter Selection.}
Using the analysis from this section, we define strategies for optimally selecting parameters for \cpsi{}.
There are two obstacles to optimizing the parameters in \cpsi{}.
Firstly, there are circular dependencies between the parameters. For example, the cryptographic parameters, $N$ and $q$ dictate the number of bins, which influences the error rate which in turn influences the parameters of the constant-weight code.
However, the Hamming weight of the constant-weight code determines the multiplicative depth, which puts a lower bound on the cryptographic parameters.
The second obstacle is that there is no precise closed-form formula for the runtime so it requires experimental evaluation.
Moreover, in many cases, there is no definitive choice for the parameters that optimize both communication and computation.
In such cases, we visualize the communication computation tradeoff to assist in parameter selection.
Hence, optimizing the parameters amounts to a non-trivial task that requires experimental evaluation and careful and systemic selection of the parameters. 

\subsection{Communication Complexity}
The total communication complexity of \cpsi{} consists of the request and response messages.
The response message depends on the function that is being evaluated. 
For example, if the set intersection is returned, the response is as big as the request message.
In contrast, if we want the intersection cardinality, the response is only one ciphertext.
In all the examples examined in this work, the response is smaller than the request size. 
Hence, to encompass all of these applications, we analyze and optimize the request size.

\paragraph{Asymptotic Communication Complexity.}
\Cref{thm:asymp-comm} derives the asymptotic complexity of the request in \cpsi{} as a function of the code length, $\ell$.

\begin{theorem} \label{thm:asymp-comm}
    The asymptotic complexity of the request message in \cpsi{} is $O(\numbins\cdot\maxbinclient\cdot\ell)$, where $\ell=\ell(\effectivebitlength, \hw)$ as defined in \Cref{eq:code-length}.
\end{theorem}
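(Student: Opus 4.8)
The plan is to track the sizes of all the objects the client transmits through the preprocessing pipeline described in \Cref{sec:data-prep} and count the resulting ciphertexts. First I would recall that the request message consists exactly of the ciphertexts $ct_c$ produced at the end of the \emph{Batching (and Encryption)} step, so it suffices to bound the number of such ciphertexts and the size of each one. After hashing-to-bins, the client table $T_c$ has $\numbins$ bins, each padded to exactly $\maxbinclient$ elements. After encoding, each $\effectivebitlength$-bit entry becomes a length-$\ell$ constant-weight codeword, so $T_c'$ holds $\numbins \cdot \maxbinclient$ codewords, each occupying $\ell$ bit-slots.

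Next I would describe the batching step: for each $i \in [\maxbinclient]$, the $i$-th codeword from every bin is laid out across the $N$ plaintext slots, and since a codeword has $\ell$ bits, this contributes $\ell$ plaintexts per batch. Under the simplifying assumption $\numbins = N$ this gives exactly $\ell \cdot \maxbinclient$ plaintexts; in general one needs $\ceil{\numbins/N}$ plaintexts per (batch, bit) pair, which is $O(\numbins/N)$. Each plaintext is encrypted into one ciphertext, and a single FV ciphertext has size $O(N \log q)$, which is a constant once the cryptographic parameters are fixed. Multiplying the count $O(\ceil{\numbins/N}\cdot\maxbinclient\cdot\ell)$ by the per-ciphertext cost $O(N\log q)$ and absorbing the fixed cryptographic parameters into the constant collapses the $N$ factors and yields $O(\numbins\cdot\maxbinclient\cdot\ell)$, as claimed.

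The only genuinely delicate point — and the step I'd expect to need the most care — is making the accounting robust to the $\numbins \neq N$ case, since the paper's running example assumes $\numbins = N$. I would handle this by noting that the ceiling $\ceil{\numbins/N}$ packs $\numbins$ bins into $\ceil{\numbins/N}$ plaintexts of $N$ slots each, so the total slot count used for the client data is $\ceil{\numbins/N}\cdot N \cdot \maxbinclient \cdot \ell = O(\numbins \cdot \maxbinclient \cdot \ell)$ bits, and each ciphertext carries $O(N\log q)$ bits, so the count of ciphertexts is $O(\ceil{\numbins/N}\cdot\maxbinclient\cdot\ell)$ and the total transmitted size is this count times $O(N \log q)$, which is $O(\numbins \cdot \maxbinclient \cdot \ell \cdot \log q)$; treating $\log q$ as a fixed-parameter constant gives the stated bound. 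Everything else is bookkeeping: the dummy-padding to $\maxbinclient$ is what makes the count independent of the actual bin loads, and the definition of $\ell = \ell(\effectivebitlength,\hw)$ from \Cref{eq:code-length} is simply substituted in at the end.
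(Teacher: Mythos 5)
Your argument is correct and follows essentially the same route as the paper: the paper's proof simply observes that the request consists of the encryptions of $T_c'$, a table of dimension $\numbins\times\maxbinclient\times\ell$ with one bit per cell, and concludes $O(\numbins\cdot\maxbinclient\cdot\ell)$. Your additional bookkeeping for the $\numbins\neq N$ case and the per-ciphertext size $O(N\log q)$ matches what the paper defers to its separate ``Concrete Communication Cost'' discussion, so nothing is missing.
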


\begin{proof}
As shown in \Cref{sec:alg-desc}, in the request, the encrypted plaintexts containing $T'_c$ are sent to the server. $T'_c$ is a table with dimension $\numbins \times \maxbinclient \times \ell$, with one bit in each cell of this table. Hence, the total communication is $O(\numbins\cdot\maxbinclient\cdot\ell)$.
\end{proof}

\begin{corollary}
    If we use Cuckoo hashing in \cpsi{}, then the communication complexity can be simplified to $O(\clientsetsize \cdot \ell)$ where $m$ is the client set size and $\ell=\ell(\effectivebitlength, h)$ as defined in \Cref{eq:code-length}.
\end{corollary}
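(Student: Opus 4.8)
The plan is to specialize \Cref{thm:asymp-comm} to the case of Cuckoo hashing and show that the three-factor bound $O(\numbins\cdot\maxbinclient\cdot\ell)$ collapses to $O(\clientsetsize\cdot\ell)$. The key observation is that Cuckoo hashing is precisely the binning strategy in which each bin holds at most one element, so $\maxbinclient=1$, and in which the number of bins is linear in the client set size, i.e., $\numbins=\epsilon\clientsetsize$ for a constant $\epsilon$. Both of these facts are already recalled in \Cref{sec:binning-strategy}: "there are $m$ elements which are placed in $b=\epsilon m$ bins" and "each bin holds at most one element."

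Concretely, I would proceed as follows. First, invoke \Cref{thm:asymp-comm} to get that the request size is $O(\numbins\cdot\maxbinclient\cdot\ell)$. Second, substitute $\maxbinclient=1$, since under Cuckoo hashing the client's table $T_c$ has a maximum bin load of one. Third, substitute $\numbins=\epsilon\clientsetsize$ with $\epsilon$ a constant, so that $\numbins = O(\clientsetsize)$; this uses the fact that $\lambda$ (the security/failure parameter) is fixed, so the stashless Cuckoo-hashing parameters $\epsilon$, $k$, and stash size are constants independent of $\clientsetsize$ and $\serversetsize$. Fourth, conclude that $O(\numbins\cdot\maxbinclient\cdot\ell) = O(\epsilon\clientsetsize\cdot 1\cdot\ell) = O(\clientsetsize\cdot\ell)$. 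The code length $\ell=\ell(\effectivebitlength,\hw)$ is unchanged by the choice of binning strategy, so it simply carries through.

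I do not expect a genuine obstacle here — this is a routine corollary. The only point requiring a little care is making explicit that $\numbins=\Theta(\clientsetsize)$ rather than merely $\numbins\le\clientsetsize$ or $\numbins\ge\clientsetsize$: one must appeal to the standard analysis of stashless Cuckoo hashing (cited in \Cref{sec:binning-strategy} as~\cite{cuckoohashing2004, pinkas2018efficient, Arbitman2010BackyardCH, kirsch2010more}) to justify that a constant expansion factor $\epsilon$ suffices to keep the failure probability below the target $2^{-\alpha}$, with the stash empty. Once that is granted, the substitution is immediate and the corollary follows in one line.
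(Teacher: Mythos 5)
Your proposal is correct and matches the argument the paper intends: the corollary is stated without an explicit proof precisely because it follows by substituting the Cuckoo-hashing facts $\maxbinclient=1$ and $\numbins=\epsilon\clientsetsize$ (both recalled in \Cref{sec:binning-strategy}) into the bound $O(\numbins\cdot\maxbinclient\cdot\ell)$ of \Cref{thm:asymp-comm}. Your added care about $\numbins=\Theta(\clientsetsize)$ with constant $\epsilon$ from the stashless Cuckoo-hashing analysis is the right justification and is consistent with the paper's cited parameter choices.
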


\paragraph{Concrete Communication Cost.} 
\Cref{thm:asymp-comm} only states the asymptotic complexity of the request size, but in practice, the size of the ciphertext must also be considered.
Encoded elements in the client's set are put into $\ceil{b/N}\cdot\maxbinclient\cdot\ell$ ciphertexts and each ciphertext in the request is approximately $N\cdot q$ bits, where $q$ is the FV ciphertext modulus which is used.
So overall, the size of the request in \cpsi{} is $\ceil{\numbins/N}\cdot\maxbinclient\cdot\ell\cdot N\cdot q$ bits (or simply $\numbins\cdot\maxbinclient\cdot\ell\cdot q$ when $\numbins$ is a multiple of $N$).

Note that the ciphertext modulus depends on the multiplicative depth, which depends only on the Hamming weight.
$N$ and $q$ must also satisfy requirements for security which have been outlined in the literature~\cite{HomomorphicEncryptionSecurityStandard}.
For each Hamming weight, we select the smallest parameter set, which ensures the correctness of the result while also providing at least 128-bit security.
\Cref{tab:mod-per-hamming-weight} shows the polynomial modulus degree and bitlength of the coefficient modulus we use for each $\hw$.

\begin{table}[H]
    \centering
    \caption{Polynomial modulus degree and coefficient modulus as a function of the Hamming weight.}
    \label{tab:mod-per-hamming-weight}    
    \begin{tabular}{c|c|c|c|c|c|c|c}
        \toprule
        $\hw$ & 1 & 2 & 3-4 & 5-8 & 9-16 & 17-32 & 33-64 \\
        \midrule
        $\log_2 N$ & 12 & 13 & 13 & 13 & 14 & 14 & 14 \\
        $\log_2 q$ & 72 & 144 & 168 & 204 & 240 & 276 & 312 \\
        \bottomrule
    \end{tabular}
\end{table}

\paragraph{Optimizing Communication Cost.}
The communication cost is derived as a function of the code length.
However, the code length itself depends on the Hamming weight and effective bitlength, as shown in \Cref{eq:code-length}.
Hence, to minimize the communication complexity, we look for the Hamming weight which minimizes $\ceil{\numbins/N}\cdot\maxbinclient\cdot\ell\cdot N\cdot q$.

We know that $\maxbinclient$ does not depend on the Hamming weight so it acts as a constant in our optimization and can be removed.
So we can minimize $\ceil{\numbins/N} \cdot \ell \cdot N\cdot q$, where $\ell=\ell(\effectivebitlength, \hw)$.
The effect of $\numbins$ is more convoluted and must be taken into account in the optimization.
For simplicity, we assume $\numbins\leq4096$, which is less than the smallest possible value for $N$, so $\ceil{\numbins/N}=1$. We show results for other values of $\numbins$ in the appendix.

\Cref{fig:ell-per-ex-4096} plots $\ceil{\numbins/N} \cdot\ell\cdot N \cdot q$ with $\numbins\leq 4096$ as a function of the Hamming weight for $\effectivebitlength=16,32,48$ to demonstrate that a minimum exists.
The minimum communication occurs for $\hw=8, 8, 23$, respectively.

We plot the optimal Hamming weight for communication for each effective bitlength in \Cref{fig:best-hamming-weight} assuming $\numbins\leq 4096$.
In the case of a tie between two Hamming weights, we choose the smaller Hamming weight since it requires less computation.
Generally, as the effective bitlength grows, a larger Hamming weight is required to achieve the smallest communication cost.

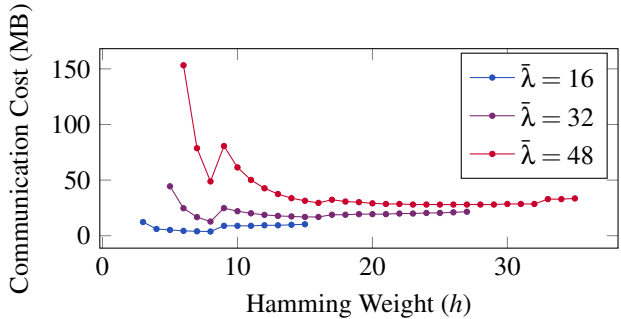
\begin{figure}[H]
    \centering
    \begin{tikzpicture}
        \begin{axis}[
            table/col sep=comma,
            xlabel={Hamming Weight ($\hw$)},
            ylabel={Communication Cost (MB)},
            width=\columnwidth,
            height=0.5\columnwidth
        ]
            \addplot [mark=*, mark size=1pt, mark options={color=pepsiblue}, color=pepsiblue] table [y expr=\thisrow{CodeLengthWeight}/\BMB, x=HammingWeight]{data/output-bt=16-b=4096.csv};
            \addlegendentry{$\effectivebitlength=16$}
            \addplot [mark=*, mark size=1pt, mark options={color=pepsipurple}, color=pepsipurple] table [y expr=\thisrow{CodeLengthWeight}/\BMB, x=HammingWeight]{data/output-bt=32-b=4096.csv};
            \addlegendentry{$\effectivebitlength=32$}
            \addplot [mark=*, mark size=1pt, mark options={color=pepsired}, color=pepsired] table [y expr=\thisrow{CodeLengthWeight}/\BMB, x=HammingWeight]{data/output-bt=48-b=4096.csv};
            \addlegendentry{$\effectivebitlength=48$}
        \end{axis}
    \end{tikzpicture}
    \caption{Code length as a function of the Hamming weight for $\effectivebitlength\in\{16, 32, 48\}$ for $\numbins\leq 4096$. The minimum occurs for a Hamming weight of 8, 8, and 23, respectively.}
    \label{fig:ell-per-ex-4096}
\end{figure}

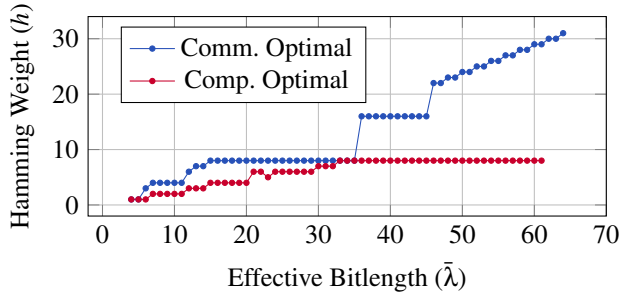
\begin{figure}[H]
    \centering
    \begin{tikzpicture}
        \begin{axis}[
            table/col sep=comma,
            xlabel=Effective Bitlength ($\effectivebitlength$),
            ylabel={Hamming Weight ($\hw$)},
            xmajorgrids,
            ymajorgrids,
            legend style={at={(0.3,0.95)}, anchor=north,legend columns=1},
            width=\columnwidth,
            height=0.5\columnwidth
        ]
            \addplot[mark=*, mark size=1pt, mark options={color=pepsiblue}, color=pepsiblue] table [y=BestHammingWeight, x=EffectiveBitlength]{data/best-hamming-b=4096.csv};
            \addlegendentry{Comm. Optimal}
            \addplot[mark=*, mark size=1pt, mark options={color=pepsired}, color=pepsired] table [y=BestHammingWeight, x=EffectiveBitlength]{data/best-hamming-comp-b=4096.csv};
            \addlegendentry{Comp. Optimal}
        \end{axis}
    \end{tikzpicture}
    \caption{
        Hamming weight which optimizes communication and computation as a function of the effective bitlength ($\effectivebitlength$) in blue and red, respectively. We assume $\numbins\leq 4096$ in these graphs.
    }
    \label{fig:best-hamming-weight}
\end{figure}

\subsection{Computation Complexity}

Amongst the stages of the protocol, the Result Extraction has a small computation overhead and also depends on the specific function that we derive. The data preparation can be done offline. 
Most of the online computation time is dedicated to the Server Computation stage. Within that stage, the majority of the runtime is dedicated to computing the encrypted indicator vector (\Cref{alg:server-comp}) and is required for any subsequent computation.
The server computation algorithms described in \Cref{sec:circuit-psi} may perform additional operations after the indicator vector is derived, but in all cases, these operations are insignificant compared to calculating the indicator vector.
Hence, we base our analysis and parameter selection on \Cref{alg:server-comp} and calculate and optimize the runtime of this algorithm. 

\paragraph{Approximate Server Computation Complexity.}
We approximate the computational complexity of the Server Computation stage by counting the number of expensive homomorphic operations, i.e., plaintext and homomorphic multiplications.

\begin{theorem}\label{eq:server-comp}
    The number of homomorphic operations in \Cref{alg:server-comp} is 
        $( \ell \cdot \texttt{PM} + \hw \cdot \texttt{M} )\cdot \ceil{b/N} \cdot \maxbinclient \cdot \maxbinserver $
    where \texttt{PM} and \texttt{M} denote plaintext and homomorphic multiplication, respectively, and $\ell=\ell(\effectivebitlength, \hw)$.
\end{theorem}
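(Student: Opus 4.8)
The plan is to count the expensive homomorphic operations (plaintext multiplications \texttt{PM} and ciphertext--ciphertext multiplications \texttt{M}) invocation by invocation, using the observation that every such operation in \Cref{alg:server-comp} happens inside a call to \textsc{Arith-CW-Eq} on line~4. The accumulations on line~5 (building $ct_{ind}$) are homomorphic additions only, and the paper's cost model (``counting the number of expensive homomorphic operations, i.e., plaintext and homomorphic multiplications'') does not charge for additions; likewise the additions in line~2 of \Cref{alg:arith-cw-eq} and the subtractions of the public constants in $h'-i$ on line~3 are free. So it suffices to (i) compute the cost of one call to \textsc{Arith-CW-Eq}, and (ii) count the number of calls.

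First I would bound the cost of a single call \textsc{Arith-CW-Eq}$(ct_c[i],pt_s[i'])$. Here the first argument is a client batch, a tuple of $\ell$ ciphertexts $x_0,\dots,x_{\ell-1}$ (one per code position, each SIMD slot holding the corresponding bit for one of the $\numbins$ bins), and the second is a server batch, a tuple of $\ell$ plaintexts $y_0,\dots,y_{\ell-1}$. Line~2 forms $h'=\sum_{j\in[\ell]} x_j\cdot y_j$: each product $x_j\cdot y_j$ is a ciphertext times a plaintext, i.e.\ one \texttt{PM}, giving $\ell\cdot\texttt{PM}$ total, and the $\ell-1$ additions are free. Line~3 forms $e=(1/\hw!)\prod_{j\in[\hw]}(h'-j)$: the $\hw$ constant subtractions are free, the product of the $\hw$ ciphertext factors costs $\hw-1$ ciphertext multiplications (in any evaluation order), and the final scaling by the constant $1/\hw!$ is one more (scalar) multiplication; accounting the scaling together with the product yields $\hw\cdot\texttt{M}$. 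Hence one invocation costs $\ell\cdot\texttt{PM}+\hw\cdot\texttt{M}$, independent of $i,i'$ --- which is exactly the operand-independence that lets this be used as a SIMD operator on line~4.

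Next I would count invocations. The nested \textbf{for each} loops of \Cref{alg:server-comp} range over $i\in[\maxbinclient]$ and $i'\in[\maxbinserver]$, so line~4 executes $\maxbinclient\cdot\maxbinserver$ times when $\numbins=N$. When $\numbins\neq N$ a batch no longer fits in a single plaintext/ciphertext and is split into $\ceil{\numbins/N}$ pieces, so --- as noted immediately after \Cref{alg:server-comp} --- line~4 is repeated $\ceil{\numbins/N}$ times per $(i,i')$ pair. Multiplying the per-call cost by the number of calls gives the claimed total $(\ell\cdot\texttt{PM}+\hw\cdot\texttt{M})\cdot\ceil{\numbins/N}\cdot\maxbinclient\cdot\maxbinserver$, with $\ell=\ell(\effectivebitlength,\hw)$ by \Cref{eq:code-length}.

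The only delicate point is the per-call bookkeeping in line~3 of \Cref{alg:arith-cw-eq}: one must argue that the product $\prod_{j\in[\hw]}(h'-j)$ genuinely requires $\hw-1$ ciphertext multiplications (no evaluation order does better, so the count is tight and not merely an upper bound) and then justify folding the trailing scalar multiplication by $1/\hw!$ into the figure $\hw\cdot\texttt{M}$. Everything else is a routine traversal of the two algorithms. I would also remark, as the paper already flags, that this is only an \emph{approximation} of runtime --- it ignores relinearization and rescaling and the (cheaper) additions --- but it is an exact count of the \texttt{PM} and \texttt{M} operations performed.
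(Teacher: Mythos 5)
Your proof is correct and follows essentially the same route as the paper's: cost one call to \textsc{Arith-CW-Eq} and multiply by the $\ceil{\numbins/N}\cdot\maxbinclient\cdot\maxbinserver$ invocations. You derive the per-call cost of $\ell\cdot\texttt{PM}+\hw\cdot\texttt{M}$ directly from \Cref{alg:arith-cw-eq} rather than citing Mahdavi and Kerschbaum, and in doing so you correctly attribute the $\ell$ plaintext multiplications to line~2 and the $\hw$ ciphertext multiplications to line~3, whereas the paper's prose inadvertently swaps the two labels before stating the (correct) final formula.
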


\begin{proof}
The constant-weight equality operator in line 3 of \Cref{alg:server-comp} consists of $\ell$ homomorphic multiplications and $\hw$ plaintext multiplications~\cite{mahdavi2022constant}.
In the general case when the number of bins is larger than $N$, line 4 of \Cref{alg:server-comp} is performed $\ceil{\numbins/N}$ times.
Moreover, this line is repeated in two loops of length $\maxbinclient$ and $\maxbinserver$, hence the total number of operations is $( \ell \cdot \texttt{PM} + \hw \cdot \texttt{M} ) \cdot\ceil{\numbins/N} \cdot \maxbinclient \cdot \maxbinserver$.
\end{proof}

\paragraph{Optimizing Computation Cost.}
Given that we do not have a closed-form formula for the computation complexity of the homomorphic operators, optimizing the computation complexity is not possible.
However, the concrete computation cost of \cpsi{} can be optimized empirically.
Optimization is performed over the Hamming weight since we assumed that the bitlength, $\numbins$, $\maxbinclient$, and $\maxbinserver$ are given.

The client and server set sizes determine $\maxbinclient$ and $\maxbinserver$, based on the binning strategy.
The choice of $\hw$ does not affect any of these parameters and only influences $\ell$.
Hence, $\maxbinclient$ and $\maxbinserver$ can be treated as constants in the optimization and it suffices to minimize $\ceil{\frac{\numbins}{N}} \cdot (\ell \cdot \texttt{PM} + \hw \cdot \texttt{M})$.
Theoretically, optimizing this term is difficult since there are no closed-form formulas for the runtime of homomorphic operations.
Instead, we empirically optimize it by trying different Hamming weights.
We set $\maxbinclient=\maxbinserver=1$ to eliminate the variance introduced by those parameters and run \cpsi{} using only one thread.

\begin{figure}[H]
    \centering
    \begin{tikzpicture}
        \begin{axis}[
            table/col sep=comma,
            xlabel={Hamming Weight ($\hw$)},
            ylabel={Runtime (ms)},
            width=\columnwidth,
            height=0.63\columnwidth
        ]
            \addplot [mark=*, mark size=1pt, mark options={color=pepsiblue}, color=pepsiblue] table [y=runtime, x=hw]{data/output-comp-bt=16-b=4096.csv};
            \addlegendentry{$\effectivebitlength=16$}
            \addplot [mark=*, mark size=1pt, mark options={color=pepsipurple}, color=pepsipurple] table [y=runtime, x=hw]{data/output-comp-bt=32-b=4096.csv};
            \addlegendentry{$\effectivebitlength=32$}
            \addplot [mark=*, mark size=1pt, mark options={color=pepsired}, color=pepsired] table [y=runtime, x=hw]{data/output-comp-bt=48-b=4096.csv};
            \addlegendentry{$\effectivebitlength=48$}
        \end{axis}
    \end{tikzpicture}
    \caption{Runtime as a function of the Hamming weight for $\effectivebitlength\in\{16, 32, 48\}$ for $\numbins\leq 4096$. The minimum occurs for a Hamming weight of 4, 8, and 8, respectively.}
    \label{fig:ell-per-ex-comp-4096}
\end{figure}
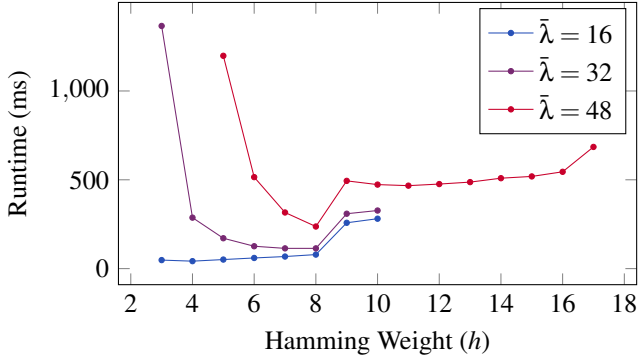

\Cref{fig:best-hamming-weight} shows, in red, the Hamming weights which optimize computation time for each effective bitlength assuming that $\numbins\leq 4096$.
Similar to the communication, the optimal Hamming weight for runtime increases as $\effectivebitlength$ increases.

\subsection{Communication/Computation Tradeoff}
Note that choosing parameters that optimize the computation would increase communication costs and vice versa.
In practice, it may be reasonable to choose parameters that fall somewhere between the two extremes. \Cref{fig:tradeoff} visualizes the communication/computation tradeoff for the intersection as the Hamming weight varies.
We plot the communication/computation tradeoff for $\effectivebitlength=16,24,32$ assuming that $\numbins \leq 4096$.

\begin{figure}[H]
    \centering
    \begin{tikzpicture}[]
        \begin{axis}[
            table/col sep=comma,
            xlabel={Communication (MB)},
            ylabel={Runtime (ms)},
            xmajorgrids,
            ymajorgrids,
            colormap={redblue}{color=(pepsired) color=(pepsiblue)},
            colorbar,
            colorbar style={
                title={Hamming weight},
                ytick={3,4,...,9,10},
                title style={at={(2.2,0.7)}, anchor=west, rotate=-90} 
            },
            width=0.8\columnwidth,
            height=0.7\columnwidth
        ]
            \addplot[mesh, line width=1pt, scatter, scatter src=explicit, point meta=explicit, scatter/use mapped color={draw=mapped color, fill=mapped color, mark=triangle*}] table [y=runtime, x=comm, meta=hw]{data/tradeoff-16-filtered.csv};


            \addplot[mesh, line width=1pt, scatter, scatter src=explicit, point meta=explicit, scatter/use mapped color={draw=mapped color, fill=mapped color, mark=square*}] table [y=runtime, x=comm, meta=hw]{data/tradeoff-24-filtered.csv};

            \addplot[mesh, line width=1pt, scatter, scatter src=explicit, point meta=explicit, scatter/use mapped color={draw=mapped color, fill=mapped color, mark=*}] table [y=runtime, x=comm, meta=hw]{data/tradeoff-32.csv};

            
        \end{axis}
    \end{tikzpicture}
    \caption{Visualizing the tradeoff between communication cost and runtime in \cpsi{} for $\effectivebitlength=16$ (triangles), $\effectivebitlength=24$ (squares), and $\effectivebitlength=32$ (circles) with $\numbins \leq 4096$. Each point indicated running \cpsi{} with a specific Hamming weight which varies from 3 (on the lower right side) to 9 (on the upper left).
    We set $\gamma=\mu=1$ in this example to isolate the effect of other parameters.
    }
    \label{fig:tradeoff}
\end{figure}
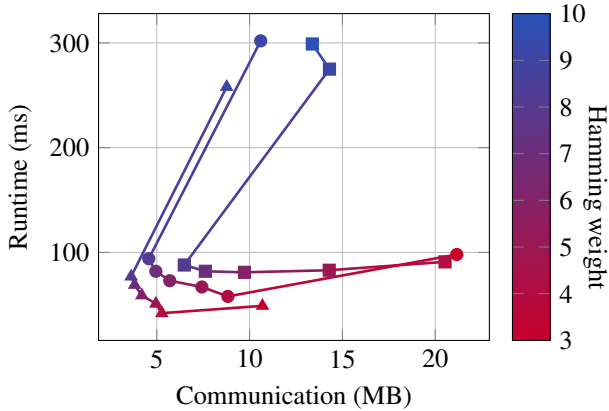

Notice that in some cases, the communication optimal may incur a high burden on the computation for very little gain, and vice versa. For example, in \Cref{fig:ell-per-ex-4096} for $\effectivebitlength=48$, the communication cost decreases very slightly (less than 6\%) as we increase $\hw$ from $16$ to $23$. However, in \Cref{fig:ell-per-ex-comp-4096}, the computation cost is much lower for $\hw=16$ compared to $\hw=23$ (over 40\%).

\section{Evaluation}

\paragraph{Experimentation Details.}

We implement \cpsi{} in C++ using the Microsoft SEAL\footnote{\url{https://github.com/microsoft/SEAL}} library and measure runtime and communication.
The SEAL library implements a variant of the FV cryptosystem and supports the operations we use (recall Section~\ref{sec:background}). 
Our implementation of PEPSI and the code used in our evaluation is available on Github~\footnote{\url{https://github.com/RasoulAM/pepsi}}.
We run experiments using two scenarios: 1) 32-bit elements and 2) large elements.
32-bit elements can be used to represent phone numbers in private contact discovery~\cite{kales2019mobile, kacsmar2020differentially}.
Large elements are useful for applications such as checking for compromised credentials, where credentials could be strings of arbitrary length~\cite{li2019protocols}.
Parameters for these two scenarios are selected as follows:

\emph{Parameters for 32-bit Elements.}
Based on \Cref{fig:best-hamming-weight}, the best Hamming weight we can choose for the communication cost is $\hw=8$, which results in $N=2^{13}$ and $\log_2 q = 192$.
As explained in \Cref{sec:alg-desc}, we round the number of bins up to a multiple of $N$ for efficiency.
Consequently, in our experiments, we have $\numbins=N=2^{13}$, $\effectivebitlength=19$, and $\ell=24$.
For our binning strategy, we use Cuckoo hashing with three hash functions without a stash, so $\maxbinclient=1$.
We choose the parameters of Cuckoo hashing such that the failure rate is less than $2^{-40}$.
Based on the analysis in the literature~\cite{pinkas2019efficientcircuitbased, pinkas2018scalable}, the number of bins must be at least $1.27 \clientsetsize$.
The server bin size depends on the server set size and the number of bins.
We use upper bounds for the server bin load found in the literature~\cite{cuckoohashing2004} and confirm the numbers experimentally using a simulation of the binning strategy.

\emph{Parameters for Large Elements.}
We choose the parameters such that the failure rate due to colliding elements is less than $\alpha=2^{-40}$, so $\lambda$ changes as the server set size increases.
The binning strategy is Cuckoo hashing with three hash functions and no stash, so $\maxbinclient=1$.
As a result,
$\numbins=N=16384,
\lambda=40+\ceil{\log_2(\numbins \cdot\maxbinclient\cdot \maxbinserver)} = 40+\ceil{\log_2(\numbins \cdot \maxbinserver)}, \effectivebitlength=40+\ceil{\log_2(\maxbinserver)}.$

\subsection{PSI Evaluation and Comparison}


\paragraph{Comparison with DiPSI~\cite{kacsmar2020differentially}.}

DiPSI~\cite{kacsmar2020differentially} is the most similar work in the literature and operates in one round, similar to \cpsi{}.
While \cpsi{} uses a homomorphic equality operator, similar to DiPSI, it improves on DiPSI in several aspects.
The two main differences are the use of a more efficient equality operator and permutation-based hashing.
These differences result in significant improvements in concrete runtime and communication costs.
Moreover, the advantage is increased by careful optimization of the parameters of \cpsi{}, as was mentioned in \Cref{sec:analysis}.
Such optimizations are not required in DiPSI given that there is no variable parameter in the equality operator.

\Cref{tab:compare-protocols} shows that the asymptotic complexity of DiPSI and \cpsi{} is identical, but if we include parameters regarding the bitlength of elements, there is a difference.
\Cref{tab:compare-circuit-psi-with-he-theory} displays the asymptotic complexity of the two protocols.
The table also shows that the multiplicative depth of \cpsi{} does not depend on the bitlength, which results from our choice of equality operator.

\begin{table}[H]
    \centering
    \resizebox{\columnwidth}{!}{
    \begin{tabular}{c|c|c|c}
    \toprule
        Protocol & Communication & Computation & Mult. Depth\\
    \midrule
        DiPSI & $O\left(m \lambda \right)$ & $O\left(n \cdot \lambda \right)$ & $\log_2 \lambda$\\
       \cpsi{} & $O\left(m \cdot \ell( \effectivebitlength, h) \right)$ & $O\left(n \cdot \ell (\effectivebitlength, h)\right)$ & 1 + $\log_2 h$ \\
    \bottomrule
    \end{tabular} 
    }
    \caption{Asymptotic complexity of DiPSI vs \cpsi{} in the unbalanced case, i.e., $m\ll n$.}
    \label{tab:compare-circuit-psi-with-he-theory}
\end{table}

All our experiments follow the unbalanced setting where the server has more elements than the client.
DiPSI performs the intersection over 32-bit elements, so we do the same for a fair comparison.
Cong et al.~\cite{cong2021LabelledPSI} allows to use elements of arbitrary length so we compare with the variant of \cpsi{} which can do the same.
We measure all messages exchanged as the total communication and for computation, we measure the server's total runtime.
We repeat experiments three times and report the average.
However, the runtime of DiPSI is prohibitively high, so we take numbers from their paper~\cite{kacsmar2020differentially}.

\Cref{tab:dipsi-cpsi} shows runtime of DiPSI and \cpsi{} for intersection of sets with 32-bit elements.
This table shows that DiPSI is much slower than our work.
\cpsi{} is at least four orders of magnitude faster than DiPSI.
Most of this speedup can be attributed to the faster comparison operator, which we use in \cpsi{}.
The communication of \cpsi{} is much smaller as well, which stems from 1) better parameters for the cryptosystem, 2) permutation-based hashing in conjunction with Cuckoo hashing, and 3) optimal Hamming weight, which minimizes communication.
\cpsi{} requires 90\% less communication compared to DiPSI.
DiPSI can extend to labelled PSI and circuit PSI as well, which increases the total runtime and communication. We omit DiPSI from the rest of the evaluation due to its low performance.

\begin{table}[t]
    \centering
    \caption{Private set intersection over 32-bit elements using DiPSI and \cpsi{}. The bin size in \cpsi{} is $b=8192$ in all cases. DNF indicates experiments which did not finish in less than an hour. * We copied the runtimes and communicating cost for DiPSI from their paper~\cite{kacsmar2020differentially}. Note that the runtime of DiPSI is in minutes.}
    \label{tab:dipsi-cpsi}    
\resizebox{\columnwidth}{!}{
    \begin{tabular}{c|c|c|c|c|c|c|c|c}
    \toprule
    \multirow{3}{*}{$n$} & \multirow{3}{*}{$m$} & \multicolumn{2}{c|}{DiPSI~\cite{kacsmar2020differentially}} & \multicolumn{5}{c}{\cpsi{}} \\
    & & Time & Comm. & $\maxbinserver$ & \multicolumn{2}{c|}{Time (s)} & \multicolumn{2}{c}{Comm. (MB)} \\
    & & \textbf{(mins)} & (MB) & & Offline & Online & Req. & Resp. \\
    \midrule
    \multirow{3}{*}{$2^{20}$} 
     & $1024$ & 600$^*$ & 200$^*$ & \multirow{3}{*}{526} & \multirow{3}{*}{0.11} & 0.83 & 4.1 & 0.11 \\
     & $2048$ & 350$^*$ & 200$^*$ & & & 0.81 & 4.1 & 0.11 \\
     & $4096$ & 190$^*$ & 200$^*$ & & & 1.5 & 8.1 & 0.22 \\
    \midrule
    \multirow{3}{*}{$2^{24}$} 
     & $1024$ & DNF & DNF & \multirow{3}{*}{6710} & \multirow{3}{*}{1.5} & 8.9 & 4.1 & 0.11 \\
     & $2048$ & DNF & DNF & & & 9.0 & 4.1 & 0.11 \\
     & $4096$ & DNF & DNF & & & 17.5 & 8.1 & 0.22 \\
    \midrule
    \multirow{3}{*}{$2^{28}$} 
     & $1024$ & DNF & DNF & \multirow{3}{*}{100565} & \multirow{3}{*}{19.9} & 133 & 4.1 & 0.11 \\
     & $2048$ & DNF & DNF & & & 133 & 4.1 & 0.22 \\
     & $4096$ & DNF & DNF & & & 260 & 8.1 & 0.22 \\
    \bottomrule
    \end{tabular}
}
\end{table}

To summarize, \cpsi{} is a strict improvement over DiPSI. It is non-interactive, similar to DiPSI, requires strictly less communication, is over three orders of magnitude faster, and is capable of all functionalities that DiPSI offers.

\paragraph{Comparison with Cong et al.~\cite{cong2021LabelledPSI}}

Cong et al.~\cite{cong2021LabelledPSI} is the state-of-the-art amongst non-interactive PSI protocols with only two rounds of interaction. 
The work of Cong et al. has strictly improved over the work of Chen et al.~\cite{chen2018LabelledPSI, chen2017fast}. 
Cong et al. denote the time required to compute the OPRF of server elements as \textit{offline} time.
We report the offline and online time separately.
We use the publicly available implementation of the work of Cong et al.~\cite{cong2021LabelledPSI}, which is published on Github\footnote{\url{https://github.com/microsoft/APSI}} and is called APSI.
We use parameters provided with their code for each client and server set size pair.
\Cref{tab:cong-cpsi} compares \cpsi{} and the work of Cong et al.~\cite{cong2021LabelledPSI} for intersection of sets with large elements.


\begin{table*}[t]
    \centering
    \begin{tabular}{c|c|c|c|c|c|c|c|c|c|c|c}
    \toprule
    \multirow{3}{*}{\begin{tabular}{c}Label\\Size\\(Bytes)\end{tabular}} & \multirow{3}{*}{$n$} & \multirow{3}{*}{$m$} & \multicolumn{4}{c|}{Cong et al.~\cite{cong2021LabelledPSI}} & \multicolumn{5}{c}{\cpsi{}} \\
    & & & \multicolumn{2}{c|}{Time (s)} & \multicolumn{2}{c|}{Comm. (MB)} & $\maxbinserver$ & \multicolumn{2}{c|}{Time (s)} & \multicolumn{2}{c}{Comm. (MB)}  \\
    & & & \small{Offline} & \small{Online} & Request & Response & & Offline & Online & Request & Response \\
    \midrule
    \multirow{9}{*}{\begin{tabular}{c}No\\Label\end{tabular}}
     & \multirow{3}{*}{$2^{20}$} 
     & $1024$ & 5.6 & 0.44 & 1.3 & 1.2 & \multirow{3}{*}{296} & \multirow{3}{*}{0.11} & 7.8 & 30.0 & 0.22 \\
     & & $2048$ & 5.7 & 0.82 & 2.7 & 1.5 & &  & 7.1 & 30.0 & 0.22 \\
     & & $4096$ & 6.1 & 0.94 & 3.5 & 2.2 & &  & 7.0 & 30.0 & 0.22 \\
    \cmidrule{2-12}
    & \multirow{3}{*}{$2^{24}$} 
     & $1024$ & 99 & 1.2 & 1.8 & 2.1 & \multirow{3}{*}{3487} & \multirow{3}{*}{1.5} & 75.4 & 33.7 & 0.22 \\
     & & $2048$ & 97 & 1.5 & 3.0 & 2.3 & &  & 75.6 & 33.7 & 0.22 \\
     & & $4096$ & 97 & 1.8 & 5.2 & 2.6 & &  & 75.9 & 33.7 & 0.22 \\
    \cmidrule{2-12}
    & \multirow{3}{*}{$2^{28}$} 
     & $1024$ & 1770 & 7.3 & 3.5 & 9.5 & \multirow{3}{*}{50812} & \multirow{3}{*}{19.9} & 1144 & 39.2 & 0.22 \\
     & & $2048$ & 1720 & 7.4 & 6.0 & 9.5 & &  & 1154 & 39.2 & 0.22 \\
     & & $4096$ & 1790 & 7.7 & 8.5 & 9.8 & &  & 1141 & 39.2 & 0.22 \\

    \midrule
    \midrule
        \multirow{6}{*}{32}
        & \multirow{2}{*}{$2^{20}$} & 256 & 92 & 2.4 & 4.2 & 3.8 & \multirow{2}{*}{296} & \multirow{2}{*}{0.11} & 7.2 & 30.7 & 2.9 \\
        & & 4096 & 95 & 2.4 & 4.4 & 3.5 & & & 7.0 & 30.7 & 2.9 \\
    \cmidrule{2-12}
        & \multirow{2}{*}{$2^{22}$} & 256 & 535 & 2.9 & 4.2 & 6.7 & \multirow{2}{*}{976} & \multirow{2}{*}{0.45} & 21.5 & 32.1 & 2.9 \\
        & & 4096 & 530 & 3.3 & 4.4 & 6.9 & & & 21.3 & 32.1 & 2.9 \\
    \cmidrule{2-12}
        & \multirow{2}{*}{$2^{24}$} & 256 & DNF & DNF & DNF & DNF & \multirow{2}{*}{3487} & \multirow{2}{*}{1.5} & 76.0 & 34.5 & 2.9 \\
        & & 4096 & DNF & DNF & DNF & DNF & & & 75.0 & 34.5 & 2.9 \\
    \midrule                
        \multirow{6}{*}{288}
        & \multirow{2}{*}{$2^{20}$} & 256 & 567 & 2.0 & 2.7 & 9.1 & \multirow{2}{*}{296} & \multirow{2}{*}{0.11} & 7.1 & 30.7 & 26.0 \\
        & & 4096 & 578 & 1.7 & 2.7 & 9.1 & & & 7.2 & 30.7 & 26.0 \\
    \cmidrule{2-12}
        & \multirow{2}{*}{$2^{22}$} & 256 & 3501 & 13.6 & 4.2 & 37.4 & \multirow{2}{*}{976} & \multirow{2}{*}{0.45} & 21.5 & 32.1 & 26.0 \\
        & & 4096 & 3388 & 14.2 & 4.4 & 35.3 & & & 21.5 & 32.1 & 26.0 \\
        
    \cmidrule{2-12}
        & \multirow{2}{*}{$2^{24}$} & 256 & DNF & DNF & DNF & DNF & \multirow{2}{*}{3487} & \multirow{2}{*}{1.5} & 76.5 & 34.5 & 26.0 \\
        & & 4096 & DNF & DNF & DNF & DNF & & & 76.1 & 34.5 & 26.0 \\
    \bottomrule
    \end{tabular}
    \caption{Private set intersection and Labelled PSI over large elements using the work of Cong et al. and \cpsi{}. The number of bins is set to $b=16384$ in \cpsi{} for all cases.}
    \label{tab:cong-cpsi}
\end{table*}

The communication cost of \cpsi{} is more than the work of Cong et al. in PSI and labelled PSI, but the gap narrows as the size of the labels increases.
Regarding runtime, most of the server runtime (over 95\%) in the work of Cong et al.~\cite{cong2021LabelledPSI} is spent computing a pseudo-random function for each server element in the offline phase.
They state that this step could be computed offline and stored on the server.
This is a valid assumption for some applications but does not apply to scenarios in which the server database frequently changes or when we require the overall runtime to be small.
In the case of PSI (without labels), for smaller server set sizes, the protocol of Cong et al. has a better total runtime.
However, for server sets with over $2^{24}$ elements, the total runtime of \cpsi{} is less than the work of Cong et al., with \cpsi{} requiring less time in the offline phase, but more time in the online phase.
Hence, in the case of PSI, \cpsi{} is preferable for large server sets that frequently change.

\subsection{Labelled PSI Evaluation and Comparison}

We compare with the work of Cong et al., which is the state-of-the-art in non-interactive labelled PSI.
They use the same parameters for the experiment but vary the label size.
We choose 32-byte and 288-byte labels in our experiments.
The results of this comparison are summarized in \Cref{tab:cong-cpsi}.

However, the work of Cong et al. has a high overhead when extending to labelled PSI instead of PSI.
The computation cost of Cong et al. scales with the size of the label. 
In contrast, our work requires negligible additional server time to compute labelled PSI.
The total runtime of \cpsi{} is consistently less than Cong et al. and the gap widens as the server set size increases, but \cpsi{} still requires more time in the online phase.
So, in summary, in the case of labelled PSI, \cpsi{} is consistently faster than Cong et al., particularly for large, dynamic server sets and large labels but requires more communication.

\begin{table}[t]
\centering
\resizebox{\columnwidth}{!}{
    \begin{tabular}{c|c|c|c|c|c|c|c}
    \toprule
        \multirow{3}{*}{$n$} & \multirow{3}{*}{$m$} & \multicolumn{2}{c|}{Ion et al.~\cite{ion2020deploying}} & \multicolumn{4}{c}{\cpsi{}} \\
        & & \multirow{2}{*}{\begin{tabular}{c}Time\\(s)\end{tabular}} & Comm. & $\maxbinserver$ & \multicolumn{2}{c|}{Time (s)} & Comm.\\
        & & & (MB) & & Offline & Online & (MB) \\
    \midrule
        \multirow{4}{*}{$2^{18}$}
        & 1024 & 52.9 & 20.0 & \multirow{4}{*}{100} & \multirow{4}{*}{0.01} & 3.1 & 27.6 \\
        & 2048 & 71.9 & 20.7 &  & & 2.7 & 27.6 \\
        & 4096 & 64.8 & 21.9 &  & & 2.7 & 27.6 \\
        & 8192 & 74.7 & 24.4 &  & & 2.8 & 27.6 \\
    \midrule
        \multirow{4}{*}{$2^{20}$}
        & 1024 & 199 & 78.2 & \multirow{4}{*}{296} & \multirow{4}{*}{0.11} & 7.0 & 30.0 \\
        & 2048 & 254 & 78.9 &  & & 7.1 & 30.0 \\
        & 4096 & 214 & 80.1 &  & & 7.0 & 30.0 \\
        & 8192 & 221 & 82.6 &  & & 6.8 & 30.0 \\
    \midrule
        \multirow{4}{*}{$2^{22}$}
        & 1024 & 950 & 311 & \multirow{4}{*}{976} & \multirow{4}{*}{0.44} & 21.4 & 31.4 \\
        & 2048 & 797 & 312 &  & & 21.4 & 31.4 \\
        & 4096 & 791 & 313 &  & & 21.4 & 31.4 \\
        & 8192 & 808 & 315 &  & & 21.5 & 31.4 \\
    \midrule
        \multirow{4}{*}{$2^{24}$}
        & 1024 & 3153 & 1240 & \multirow{4}{*}{3487} & \multirow{4}{*}{1.5} & 75.8 & 33.7 \\
        & 2048 & 3380 & 1240 &  & & 76.3 & 33.7 \\
        & 4096 & 3120 & 1250 &  & & 75.8 & 33.7 \\
        & 8192 & 3140 & 1250 &  & & 75.6 & 33.6 \\
    \bottomrule
    \end{tabular}
}
    \caption{PSI-Sum using \cpsi{} and the work of Ion et al.~\cite{ion2020deploying}. Times are reported in seconds, and communication is in MegaBytes. DNF denotes instances which did not finish in under one hour. The number of bins is set to $b=16384$ in \cpsi{} for all cases.}
    \label{tab:psisum-eval}
\end{table}

\subsection{Circuit PSI Evaluation and Comparison}

Our evaluation of circuit PSI is focused on non-interactive solutions, which are compatible with the unbalanced setting. Hence, we omit solutions based on 2PC~\cite{pinkas2018scalable, pinkas2019efficientcircuitbased}, which require interaction between the client and server to compute the function. 
DiPSI can extend to circuit PSI and evaluate arbitrary functions but has a prohibitively high runtime, as we saw in \Cref{tab:dipsi-cpsi}.
Recall that the work of Chen et al.~\cite{chen2017fast, chen2018LabelledPSI} and Cong et al.~\cite{cong2021LabelledPSI} cannot extend to circuit PSI.
Instead, we compare with the work of Ion et al.~\cite{ion2020deploying}, which is also a circuit PSI protocol but is limited to only a few functions: PSI-Cardinality and PSI-Sum-with-Cardinality.
We use the public implementation\footnote{\url{https://github.com/google/private-join-and-compute}} which is provided by the authors.
PSI-Stats~\cite{yingPSIStatsPrivateSet2022} performs the same operations as Ion et al. in the first two rounds of interaction and requires more operations when computing more complex functions.
In the case of computing the PSI-Sum, PSI-Stats is identical to the work of Ion et al.
Given that there is no publicly available implementation of PSI-Stats, we resort to the measurements from the work of Ion et al.

In our experiments in this section, we set the client set size to be $m\in\{1024,2048,4096,8192\}$ and vary the server set size to observe the effect on communication and computation.
The entire communication across all rounds is recorded and the runtime denotes the server runtime.
Ion et al.~\cite{ion2020deploying} map elements to 256-bit strings using hash functions, so we use the variant of \cpsi{} with large elements to have a fair comparison. \Cref{tab:psisum-eval} summarizes the results of the experiments in this section.

Note that the cardinality is leaked whilst computing the sum in the work of Ion et al.~\cite{ion2020deploying}.
In contrast, \cpsi{} does not have such leakage and is advantageous in this regard.
Moreover, Ion et al.~\cite{ion2020deploying} is very limited in the functions that it can compute.
Another observation is that the communication of the work of Ion et al. scales linearly with the server set size~\cite{ion2020deploying}, whereas the communication cost of \cpsi{} increases at a much lower rate.
For this reason, \cpsi{} is advantageous in the unbalanced setting and can scale to much larger server set sizes.
The runtime of both protocols increases as the server size increases, but \cpsi{} is over 100x times faster than the work of Ion et al.~\cite{ion2020deploying}.
One reason that \cpsi{} is faster than the work of Ion et al.~\cite{ion2020deploying} is the low per-element computation cost due to the use of batched homomorphic encryption.
In contrast, the work of Ion et al. uses expensive modular exponentiations.
The client must also perform expensive modular exponentiations, whereas in \cpsi{}, the client requires only a small amount of computation.

\subsection{Summary of Evaluation}

In our evaluation of PSI, we found that \cpsi{} demonstrates runtime that is comparable with other approaches~\cite{cong2021LabelledPSI, chen2018LabelledPSI} but requires more communication.
However, \cpsi{} excels particularly when the server size is very large and dynamic.
In the context of labelled PSI, \cpsi{} is consistently faster than existing methods but requires more communication.
The advantage of \cpsi{} increases as the label size increases.
For circuit PSI, and specifically for PSI-Sum, \cpsi{} is faster than existing non-interactive approaches~\cite{kacsmar2020differentially,ion2020deploying}, especially when dealing with larger server sets.
Additionally, \cpsi{} has a significant advantage in communication cost, as it only depends on the client set size, making it highly efficient for large server sets.
Furthermore, \cpsi{} offers the flexibility to easily extend to other functions, a capability that is not feasible with related work.

\section{Conclusion}

We propose \cpsi{}, a practical non-interactive circuit PSI protocol using homomorphic encryption.
Some state-of-the-art PSI protocols cannot extend to circuit PSI.
Those extending to circuit PSI require an interactive step with the client to compute the function or have the communication cost proportional to the server set size.
These limitations are undesirable in the unbalanced setting.
DiPSI, the only solution which does not have these limitations, has impractical runtimes.
\cpsi{} addresses all these problems and proposes an efficient circuit PSI protocol with low communication overhead.

We use multiple techniques, such as constant-weight equality operators and permutation-based hashing that greatly improve the runtime and communication cost of \cpsi{} and result in a protocol that is competitive with existing work.
We achieve competitive runtime and communication through careful optimization of parameters such as the Hamming weight.

\cpsi{} can compute the intersection of 1024 client elements with one million server elements in less than one second with less than 5 MB of communication.
Functions such as sum and cardinality can be computed with negligible additional runtime.
\cpsi{} is over four orders of magnitude faster than DiPSI, and 20x faster than the work of Ion et al.~\cite{ion2020deploying}.

\section*{Acknowledgements}
We gratefully acknowledge the support of NSERC for grants RGPIN-2023-03244, IRC-537591, the Government of Ontario and the Royal Bank of Canada for funding this research.

\bibliographystyle{plain}
\bibliography{references}

\appendix

\section{Proof of \Cref{lemma:fail-collide}}
\label{sec:proof-lemma}

\failcollide*
\begin{proof}
Let $\PP[B]$ denote the probability of failure due to a collision. Let $\PP[B_i]$ denote the probability of there existing a collision between any two of the elements of bin $i$, for $i\in\{1,2,\cdots,\numbins\}$. So we have 
\begin{align}
    \PP[B] \leq \sum_{i\in[\numbins]} \PP[B_i].
\end{align}

We know that, in a given bin, failure only occurs when two unequal elements, one from the server and the other from the client, have an equal constant-weight mapping.
We know that the probability of a collision between two distinct elements from the domain is $2^{-\lambda}$, where the probability is in expectation over all perfect hash functions.

Hence, combining the facts stated above, we have that 
\begin{align}
    \PP[B] \leq \sum_{i\in[\numbins]} \PP[B_i] = \sum_{i\in[\numbins]} \maxbinclient \maxbinserver 2^{-\lambda} = \numbins \maxbinclient \maxbinserver 2^{-\lambda}.
\end{align}
\end{proof}

\section{Data Preparation}
\label{sec:data-prep-code}

The algorithm for server and client data preparation is provided as Algorithm~\ref{alg:preparationappendixalgorithm}.
The client's secret key is denoted $\texttt{sk}_c$, and is used for encryption. The encryption procedure is denoted with $\enc$.
\textsc{CW-Encode} is the algorithm for mapping elements to constant-weight codewords. Precisely, we use the perfect mapping~\cite[Algorithm 3]{mahdavi2022constant} and the lossy mapping~\cite[Algorithm 8]{mahdavi2022constant} from the work of Mahdavi and Kerschbaum in the case of small and large elements, respectively. We refer the reader to the paper for the details of those algorithms.

\begin{algorithm}[t]
    \caption{Client and server data preparation}
    \label{alg:preparationappendixalgorithm}
    \begin{algorithmic}[1]
        \Procedure{ClientDataPrep}{$\clientset$}
            \State Initialize $T_c$ with $b$ bins.
            \ForEach {element $x \in \mathbb{X}$}
            \State Append $x$ to the bins of $T_c$ according to the binning strategy
            \EndFor	
            \State Append dummy elements to fill each bin in $T_c$ to the max 
            
            \ForEach{bin $k\in[\numbins]$}
                \ForEach{batch $i\in[\maxbinclient]$}
                    \State $T_c'[k][i] = \textsc{CW-Encode}(T_c[k][i],\ell,h)$
                \EndFor
            \EndFor
            
            \ForEach{batch $i\in[\maxbinclient]$}
                \ForEach{bit $j\in[\ell]$}
                    \State $pt[i][j] = \left[T_c'[1][i][j],T_c'[2][i][j], \cdots ,T_c'[b][i][j]\right]$
                    \State $ct_c[i][j] = \enc(pt_c[i][j], \texttt{sk}_c)$
                \EndFor
            \EndFor
            \State \textbf{return} $ct_c$
        \EndProcedure
        \vspace{3mm}
        \Procedure{ServerDataPrep}{$\serverset$}
            \State Initialize $T_s$ with $b$ bins
            \ForEach {server element $y \in \mathbb{Y}$}
            \State Append $y$ to bins chosen by the binning strategy.
            \EndFor	
            \State Append dummy elements to fill each bin in $T_s$ to $\maxbinserver$.
    
            \ForEach{batch $k\in[\numbins]$}
                \ForEach{batch $i\in[\maxbinserver]$}
                    \State $T_s'[k][i] = \textsc{CW-Encode}(T_c[k][i],\ell,h)$
                \EndFor
            \EndFor
            
            \ForEach{batch $i\in[\maxbinserver]$}
                \ForEach{bit $j\in[\ell]$}
                    \State $pt_s[i][j] = \left[T_s'[1][i][j],T_s'[2][i][j], \cdots ,T_s'[b][i][j]\right]$
                \EndFor
            \EndFor	
            \State \textbf{return} $pt_s$
        \EndProcedure
    \end{algorithmic}
\end{algorithm}

\section{Large Labels}
\label{sec:labelled-large}

\Cref{alg:server-comp-labelled} can be extended to the case where server labels are larger than one plaintext slot.
The expensive step of the algorithm, which is matching client and server elements, does not need to be repeated. Instead, we simply repeat line 4 of \Cref{alg:server-comp-labelled} for as many times that is required, depending on the size of the label.


\section{PSI-kth-Match.}
Another function that the server can compute over the intersection is returning only one element in the intersection.
This can be used to iteratively return the results of the protocol.
For example, a data scientist may wish to see a sample of items in the intersection before deciding whether to recieve the entire intersection.

For simplicity, we only describe the cleartext algorithm for finding the k$^{th}$ match in a vector.
The algorithm is designed such that it can be efficiently computed using homomorphic encryption.
We include correctness and security proof to show that this algorithm can compute the k$^{th}$ match whilst not revealing any other information about the intersection.
The full details on how to implement this algorithm using homomorphic encryption and the FV library are left for future work.

Assume that $I\in\{0,1\}^{n}$ denotes a vector with ones in some indices. The objective is to compute $M$ such that $M[i_k]=0$ where $i_k$ is the position of the k$^{th}$ one in $I$. \Cref{alg:server-comp-kth-match} shows the algorithm that can achieve this.

\begin{algorithm}
    \caption{Algorithm for k$^{th}$ match}\label{alg:server-comp-kth-match}
    \begin{algorithmic}[1]
        \Procedure{ComputePSIkthMatch}{$I\in \{0,1\}^{n}, k$}
            \ForEach{$i \in [n]$}
                \State $r \xleftarrow{\$} \ZZ_p$
                \State $M[i] \leftarrow r \cdot (k \cdot I[i] - 1 - \sum_{i'<i} I[i'])$
            \EndFor
            \State \textbf{return} $M$
        \EndProcedure
    \end{algorithmic}
\end{algorithm}

\begin{theorem}
    \Cref{alg:server-comp-kth-match} is correct, i.e., $M$ is zero at the index of the k$^{th}$ one in $I$, and non-zero in all the indicies.
\end{theorem}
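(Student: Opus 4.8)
The plan is to analyze the quantity $M[i] = r \cdot (k \cdot I[i] - 1 - \sum_{i' < i} I[i'])$ by considering the integer factor $f(i) := k \cdot I[i] - 1 - \sum_{i' < i} I[i']$ and showing it is zero exactly at the index of the $k$-th one in $I$. First I would introduce the prefix-count $S(i) := \sum_{i' < i} I[i']$, the number of ones strictly before position $i$. The argument splits into two cases according to the value of $I[i]$. If $I[i] = 0$, then $f(i) = -1 - S(i)$, which is at most $-1$ and hence strictly negative (so nonzero) regardless of $S(i)$. If $I[i] = 1$, then $f(i) = k - 1 - S(i)$, so $f(i) = 0$ precisely when $S(i) = k - 1$, i.e., when position $i$ is preceded by exactly $k-1$ ones --- which is the definition of $i$ being the index of the $k$-th one in $I$. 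So among positions with $I[i]=1$, exactly the $k$-th such position gives $f(i) = 0$; positions with fewer than $k-1$ preceding ones give $f(i) \geq 1 > 0$, and positions with $k$ or more preceding ones give $f(i) \leq -1 < 0$.

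Having established that $f(i) = 0$ iff $i$ is the index of the $k$-th one, I would then lift this to the statement about $M[i] = r \cdot f(i)$ over $\ZZ_p$. When $f(i) = 0$, clearly $M[i] = 0$. When $f(i) \neq 0$ as an integer, I need $M[i] = r \cdot f(i) \neq 0$ in $\ZZ_p$; since each $r$ is sampled uniformly from $\ZZ_p$, $r \cdot f(i) = 0$ only when $r = 0$ (as $p$ is prime and $f(i) \not\equiv 0 \bmod p$, the latter requiring a mild bound $|f(i)| < p$, which holds since $|f(i)| \leq \max(n, k)$ and $p$ is chosen large). So $M[i]$ is nonzero except with the negligible probability that the fresh $r$ equals zero --- this is why the theorem statement should really be read as holding with overwhelming probability over the randomness, and I expect the accompanying security argument to use exactly this uniform-$r$ masking.

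The main obstacle is a bookkeeping subtlety rather than a deep one: being careful about the off-by-one in the definition of $S(i)$ (strict versus non-strict prefix sum) so that the condition "$S(i) = k-1$" correctly picks out the $k$-th one and not the $(k+1)$-th, and confirming that the two inequality regimes ($f(i) \geq 1$ before, $f(i) \leq -1$ after) genuinely cover all non-matching indices with $I[i] = 1$. I would also want to state the precondition that $I$ contains at least $k$ ones (otherwise there is no $k$-th match and $M$ is nonzero everywhere, which is arguably still "correct" but worth noting), and to make explicit the bound on $p$ needed so that no nonzero integer $f(i)$ collapses to $0$ modulo $p$. None of these require real calculation; the core case analysis above is essentially the whole proof.
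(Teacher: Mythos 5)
Your proof is correct and follows essentially the same case analysis as the paper: the factor $k\cdot I[i]-1-\sum_{i'<i}I[i']$ equals $k-j$ at the $j$-th one and $-1-j$ at a zero position following the $j$-th one, vanishing exactly at the $k$-th one. You are in fact slightly more careful than the paper, which silently drops the random mask $r$ from its correctness argument; your observations that $r=0$ occurs with probability $1/p$ and that one needs $|f(i)|<p$ to avoid a nonzero integer collapsing modulo $p$ are legitimate refinements the paper omits.
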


\begin{proof}
Let $i_1 < i_2 < \cdots < i_s$ be the indicies such that $I[i_1] = I[i_2] = \cdots = I[i_s] = 1$ and $I[i]=0$ for $i\neq i_j$. If $M[i] = k \cdot I[i] - 1 - \sum_{i'\leq i} I[i']$ then

$$
    M[i_k] = k \cdot I[i_k] - 1 - \sum_{i' < i_k} I[i'] = k - 1 - (k-1) = 0
$$

Moreover, 
$$
    j\neq k \Rightarrow M[i_j] = k \cdot I[i_j] - 1 - \sum_{i' < i_j} I[i'] = k - j \neq 0
$$
$$
    i_{j} < i < i_{j+1} \Rightarrow M[i] = k \cdot I[i] - 1 - \sum_{i'\leq i} I[i'] = -1 - j \neq 0
$$
which proves the theorem.
\end{proof}

The security of this algorithm is proven by showing that $M$ reveals nothing about $I$ other than the position of the k$^{th}$ one in the array.

\begin{theorem}
    If $M$ is defined as in \Cref{alg:server-comp-kth-match}, then $M$ reveals nothing about $I$ other than the index of the k$^{th}$ one.
\end{theorem}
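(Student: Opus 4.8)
The plan is a standard simulation argument. I would exhibit a simulator $\mathcal{S}$ that, given only $n$, $k$, $p$ and the index $i_k$ of the $k$-th one in $I$ (with $i_k = \bot$ when $I$ has fewer than $k$ ones), samples a vector $\tilde M$ by setting $\tilde M[i_k] = 0$ and drawing $\tilde M[i] \xleftarrow{\$} \ZZ_p$ independently for every $i \neq i_k$ (and drawing every coordinate uniformly when $i_k = \bot$). Since $\tilde M$ depends on $I$ only through $i_k$, it then suffices to show that the output of $\mathcal{S}$ and the honestly generated $M$ are identically distributed.

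The heart of the argument reuses the coefficient computation from the correctness proof. Write $c_i := k\cdot I[i] - 1 - \sum_{i'<i} I[i']$, so that $M[i] = r_i\, c_i$ with the $r_i$ drawn independently and uniformly from $\ZZ_p$. The correctness proof already shows that, as an integer, $c_i$ equals $0$ exactly when $i = i_k$ and is otherwise one of the values $k-j$ or $-1-j$, so $|c_i| \le n+k$. Assuming $p$ is prime and exceeds that bound — in the typical regime $k\le n$ it is enough that $p > n+1$, which holds for the plaintext moduli used in practice — every nonzero $c_i$ is a unit of $\ZZ_p$. It follows coordinatewise that $M[i_k] = 0$ deterministically, matching $\mathcal{S}$; and for each $i \neq i_k$ the map $r_i \mapsto r_i c_i$ is a bijection of $\ZZ_p$, so $M[i]$ is uniform on $\ZZ_p$, with the coordinates of $M$ mutually independent because the $r_i$ are. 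Hence $M$ and $\tilde M$ have the same distribution, which is exactly the claim.

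The only point beyond bookkeeping is this modular subtlety: uniformity of $r_i c_i$ needs $c_i$ to be \emph{invertible} mod $p$, not merely nonzero over $\ZZ$. I would therefore state the hypothesis on $p$ explicitly and remark that, were it violated, some coordinate of $M$ would be confined to a proper coset of $\ZZ_p$ and would thereby leak the partial count $\sum_{i'\le i} I[i']$. Everything else — correctness of $\mathcal{S}$ and the final comparison of the two distributions — is immediate once the $c_i$ with $i\neq i_k$ are known to be units.
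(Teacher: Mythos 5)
Your proof is correct and rests on the same core observation as the paper's: the coefficient $c_i = k\cdot I[i] - 1 - \sum_{i'<i} I[i']$ vanishes exactly at $i_k$, and multiplying each nonzero coefficient by an independent uniform $r_i$ yields a uniform value. The paper states this in two sentences and stops; you go further in two useful ways. First, you phrase the claim as an explicit simulation argument (a simulator that sees only $i_k$ reproduces the distribution of $M$), which is the standard and more rigorous way to formalize ``reveals nothing about $I$ other than $i_k$.'' Second, and more substantively, you flag that uniformity of $r_i c_i$ over $\ZZ_p$ requires $c_i$ to be a \emph{unit} modulo $p$, not merely a nonzero integer; the paper's proof silently elides this, even though its correctness argument only establishes $c_i \neq 0$ over $\ZZ$. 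Your explicit hypothesis that $p$ is prime and exceeds $n+k$ closes that gap, and your remark about what would leak otherwise (a coordinate confined to a proper coset, exposing the partial count) is exactly the right diagnosis. In short: same route, but your version is the one that actually withstands scrutiny.
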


\begin{proof}
As shown in the correctness proof,
$$
    k \cdot I[i] - 1 - \sum_{i'\leq i} I[y'] \neq 0
$$
for $i\neq i_k$. Moreover, due to the multiplication of $r$, which is uniformly random, $M[i]$ is also uniformly random, for $i\neq i_k$. Hence, $M[i']$ for $i'\neq i_k$ does not reveal any information about $I$.
\end{proof}

\section{Optimization for different values of $\numbins$}
\Cref{fig:ell-per-ex-comp-4096} showed the relationship between runtime and the Hamming weight for a fixed range of $b$. While there is not a close formula to show the effect of $b$, we can see in the graph below that for a larger $b$, the effect is roughly the same.

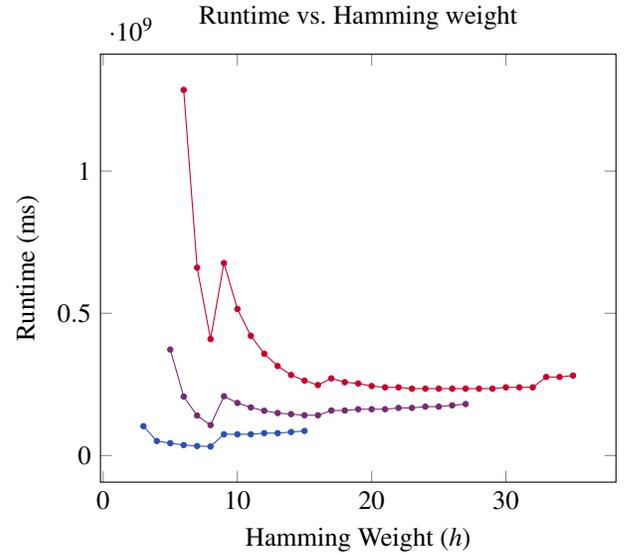
\begin{figure}[H]
    \centering
    \begin{tikzpicture}
        \begin{axis}[
            table/col sep=comma,
            xlabel={Hamming Weight ($\hw$)},
            ylabel={Runtime (ms)},
            title={Runtime vs. Hamming weight}
        ]
            \addplot [mark=*, mark size=1pt, mark options={color=pepsiblue}, color=pepsiblue] table [y=CodeLengthWeight, x=HammingWeight]{data/output-bt=16-b=8192.csv};
            \addplot [mark=*, mark size=1pt, mark options={color=pepsipurple}, color=pepsipurple] table [y=CodeLengthWeight, x=HammingWeight]{data/output-bt=32-b=8192.csv};
            \addplot [mark=*, mark size=1pt, mark options={color=pepsired}, color=pepsired] table [y=CodeLengthWeight, x=HammingWeight]{data/output-bt=48-b=8192.csv};
        \end{axis}
    \end{tikzpicture}
    
    \caption{Code length as a function of the Hamming weight for $\effectivebitlength\in\{16, 32, 48, 64\}$ for $4096<\numbins\leq 8192$. The minimum occurs for a Hamming weight of 8, 8, and 23, respectively.}
    \label{fig:ell-per-ex-8192}
\end{figure}

\end{document}